\newtheorem{theorem}{Theorem}[section]
\newtheorem{lemma}[theorem]{Lemma}
\newtheorem{prop}[theorem]{Proposition}
\newtheorem{cor}[theorem]{Corollary}
\theoremstyle{definition}
\newtheorem{definition}[theorem]{Definition}
\newtheorem{example}[theorem]{Example}
\newtheorem{observation}[theorem]{Observation}
\theoremstyle{remark}
\newtheorem{remark}[theorem]{Remark}
\begin{document}

\title{Spin Structures of Kac-Moody Type}

\author{Amir Farahmand Parsa}
\address{School of Mathematics, Institute for Research in Fundamental Sciences (IPM),  P.O. Box: 19395-5746, Tehran, Iran}
\email{a.parsa@ipm.ir}
\thanks{}

\subjclass[2010]{53C35
	, 20G44, 
	22E65, 22E67, 58B25, 55R65, 46T05, 46T10}

\date{}

\begin{abstract}
We study spin structures on affine Kac-Moody symmetric spaces and obtain sufficient conditions for their existence.\ As a byproduct of this, we obtain a spin-c representation of certain Kac-Moody quadratic subgroups of type E.
\end{abstract}

\maketitle

\section{Introduction}
We study symmetric spaces in the context of affine Kac-Moody geometry.\ Our main goal is 
to construct certain spin structures for such symmetric spaces.\ This study can be considered as a continuation of \cite{WFrei09,WFrei14,zbMATH06479353}.\ 
An approach towards studying Kac-Moody symmetric spaces is studied in \cite{zbMATH07224322,gruning:2018} which is more topological and algebraic in nature.\ Here the approach is analytical and we consider the holomorphic completions of these spaces, called \textit{affine Ka-Moody symmetric spaces}.\ They were initially studied in \cite{WFrei09}.\ 

These symmetric spaces 
have interesting connections with supergravity theories.\ It is known among M-theorists  
that the factor space $E_{10}/K(E_{10})$ which is the base for certain bosonic geodesic $\sigma$-models
has links to bosonic dynamics in some 11-dimensional supergravity theory (see \cite{zbMATH06186809}).\ In this context, elements of the quadratic subgroup or simply ``compact'' form $K(E_{10})$ of the real algebraic hyperbolic Kac-Moody group
$E_{10}$ are called \textit{hidden Kac-Moody symmetries}.\ Among these symmetries, those in
$K(E_{9})$ can be understood better because of their close connection with finite dimensional compact forms of finite dimensional semisimple Lie groups.\
For the bosonic theories, it is known that the affine factor space $E_{9}/K(E_{9})$ can be used to describe the space of solutions for motion equations for $D=11$ supergravity possessing nine commuting Killing vectors (see \cite{zbMATH03994467, zbMATH05051744}).\ Furthermore, in connection with quantum field theory, Dirac-like operators on such affine paces can be interpreted as supercharge of the supersymmetry in nonlinear $\sigma$-model in 1+1 dimensions (see \cite{zbMATH03787630, zbMATH04048607}).\ Moreover, having a spinor bundle on such spaces can help to classify and understand solutions of Dirac-like wave equations arising in supergravity theories as it does in the finite dimensional cases.

Constructed in \cite{WFrei09} (see also \cite{WFrei14,zbMATH06479353}),\ affine Kac-Moody symmetric spaces are natural infinite dimensional generalization of
finite dimensional symmetric spaces.\ They come from the holomorphic completion of algebraic
affine Kac-Moody groups and enjoy a natural tame Fr\'echet structure (for tame Fr\'echet spaces see \cite{zbMATH03787692}).\
This tame Fr\'echet structure enables us to use functional analytical tools such as the inverse function theorem
and implicit function theorem to analyze affine Kac-Moody symmetric spaces.\ Affine Kac-Moody symmetric spaces under study here are generally of the form $G/K$ where $G$ is a (real) complex geometric affine Kac-Moody group and $K$ a geometric (``compact'' real) real form obtained as the fixed point subgroup of a Cartan-Chevalley involution on $G$.\ Since their associated subalgebras are  \textit{quadratic subalgebras} of the ambient Kac-Moody algebra, we also called those fixed point subgroups \textit{quadratic subgroups}.\ Also note that these fixed point subgroups are not compact but since their finite dimensional counter parts are called compact forms, here we borrow this name from the finite dimensional terminology. 

By \textit{geometric} affine Kac-Moody groups we mean the holomorphic
completion of minimal algebraic affine Kac-Moody groups (for minimal algebraic Kac-Moody groups see \cite{zbMATH01821146,zbMATH06916409}).\ More precisely, let $G_{\mathds{R}}$ be a linear semisimple Lie group with a complexification $G_{\mathds{C}}$.\ We denote the corresponding algebra to $G_{\mathds{R}}$ and $G_{\mathds{C}}$ with $\mathfrak{g}_{\mathds{R}}$ and $\mathfrak{g}_{\mathds{C}}$ respectively.\ In this setting, we consider $\mathds{C}^{*}:=\mathds{C}\backslash\{0\}$ as the unique linear complexification of the circle $\mathcal{S}^{1}$.\ Define
\begin{equation*}
\mathds{C}^{*}G^{\sigma}_{\mathds{C}}:=\{f:\mathds{C}^{*}\to G_{\mathds{C}}~|~ f~\text{holomorphic and}~\sigma\circ f(z)=f(\omega z)\},
\end{equation*}
and 
\[
\mathds{C}^{*}G^{\sigma}_{\mathds{R}}:=\{f\in\mathds{C}^{*}G^{\sigma}_{\mathds{C}}~|~f(\mathcal{S}^{1})\subset G_{\mathds{R}}\},
\]
where $\sigma$ is a diagram automorphism of order $n$ of $\mathfrak{g}_{\mathds{C}}$ and $\omega:=e^{\frac{2\pi i}{n}}$.\ A complex \textit{geometric affine Kac-Moody group} $\widehat{\mathds{C}^{*}G^{\sigma}_{\mathds{C}}}$ is a certain $(\mathds{C}^{*})^{2}$-bundle
over $\mathds{C}^{*}G^{\sigma}_{\mathds{C}}$ whose natural real form $\widehat{\mathds{C}^{*}G^{\sigma}_{\mathds{R}}}$ is a certain torus fiber bundle over
$\mathds{C}^{*}G^{\sigma}_{\mathds{R}}$ (for details see \cite{WFrei09}).\ When $\sigma$ is the identity then $\widehat{\mathds{C}^{*}G^{\sigma}_{\mathds{R}}}$ is referred to as \textit{untwisted} and denoted by $\widehat{\mathds{C}^{*}G_{\mathds{R}}}$.\ Note that construction is analogous to the construction of affine Kac-Moody groups as certain two dimensional
extensions of polynomial loop groups.\ Now for a compact real form $K_{\mathbb{R}}$ of $G_{\mathbb{R}}$,
we define $\widehat{\mathds{C}^{*}K^{\sigma}_{\mathds{C}}}$ and $\widehat{\mathds{C}^{*}K^{\sigma}_{\mathds{R}}}$ as above.\ It turns out that these groups
can be considered as natural ``compact'' (real) forms of $\widehat{\mathds{C}^{*}G^{\sigma}_{\mathds{C}}}$
and $\widehat{\mathds{C}^{*}G^{\sigma}_{\mathds{R}}}$ respectively (see \cite{WFrei09}).\ Let $G$
denote a geometric affine Kac-Moody group and $K$ its ``compact'' (real) form for a fixed compact
(real) form $K_{\mathbb{R}}$ of $G_{\mathbb{R}}$.\ By the main results of \cite{WFrei09} (see also \cite{WFrei14,zbMATH06479353}) we know that all of the above groups are tame Fr\'echet Lie groups
with a topology locally compatible with the compact-open topology on $\mathds{C}^{*}G^{\sigma}_{\mathds{C}}$.\ Moreover, it is also shown in \cite{WFrei09} that the corresponding
quotient spaces, $G/K$ which are called \textit{affine Kac-Moody symmetric spaces}, are tame Fr\'echet manifolds.\ 

In Section~\ref{secFacSpPB}, we give a version
of the implicit function theorem for tame Fr\'echet spaces which enables us to a generalize the methods in \cite{zbMATH03098679} and prove that the natural projection 
\[
\pi:G\to G/K
\]
is a \textit{locally trivial} tame Fr\'echet principal $K$-bundle.\ In \cite{WFrei09}, the geometry considered on $G/K$ is induced by the tame Fr\'echet geometry on $G$ via $\pi$.\ Now having local trivializations 
for the bundle $\pi$ leads us to a better understanding of the local geometry of $G/K$ by looking
at $G$.\ 

In the context of string theory, physicists consider particles as loops on the space-time.\ This leads to studying free loop spaces and consequently free loop groups (see e.g., \cite{zbMATH04020059}).\  
There has been a lot of progress in the study of loop spaces and groups (in the sense of \cite{zbMATH04216267} and \cite{Land99}).\ By the local triviality for $\pi$ we are now able to pull-back structures on loop spaces over our symmetric spaces which are again locally trivial.\ In contrast with loop spaces of compact Lie groups, there exists the concept of duality among affine Kac-Moody symmetric spaces.\ Moreover, geometric affine Kac-Moody groups as holomorphic completions of algebraic minimal affine Kac-Moody groups (for algebraic minimal Kac-Moody groups see \cite{zbMATH04017219} and \cite{zbMATH01821146}) are carefully
chosen so that they contain important symmetries such as isometries of the algebraic Kac-Moody algebra in the algebra level (see \cite[Remark 3.13]{WFreyn11}).\ The relation between symmetric loop spaces
and our affine Kac-Moody symmetric spaces is not obvious.\ 

In Section~\ref{secELB} we show that for affine Kac-Moody symmetric spaces of
compact type, there exists a continuous injection from our symmetric spaces to symmetric loop spaces.\ In this section we also provide a method by which one can produce certain \textit{loop-string structures} on symmetric spaces coming from their finite dimensional part.\ 

An interesting $\text{SO}(32)$ representation of certain fixed point subgroups of Kac-Moody type is introduced and studied in a series of articles in mathematical physics (see \cite{zbMATH06088567,DaKlNi006,BuyHePa06,zbMATH02212653}).\ 
In Section~\ref{secAKMSSCT} we define a \textit{finite spin (spin-c) structure} on affine Kac-Moody symmetric spaces
of type $\widetilde{E}_{n}$ ($n\leq8$) via the above $\text{SO}(32)$ representation of
their ``compact'' forms.\ Again by the local triviality obtained in Section~\ref{secFacSpPB}, one can define the {\v C}ech cohomology and benefit from its features specially
its cohomological spectral sequence.\ Having the {\v C}ech cohomology for these bundles, we can
provide certain conditions for existence of finite spin (spin-c) structures on
the underlying symmetric spaces.\ A spin (or spin-c) structure on the symmetric spaces under study here provides a lift of the above $\text{SO}(32)$ representation to $\text{Spin}(32)$ (or $\text{Spin}^{c}(32)$).

The author would like to thank Walter Freyn, Ralf K\"ohl and Karl-Hermann Neeb for their 
valuable comments on the preliminary draft.

\section{Tame Fr\'{e}chet principal bundles}\label{secFacSpPB}
Let $G$ be a geometric affine Kac-Moody group as described in the introduction.\ Let $K$ be the fixed-point set of a Chevalley-Cartan involution and $G/K$ the corresponding affine Kac-Moody symmetric space.\ The main purpose of this section is to show that the natural projection 
\begin{equation}\label{eqthmmaincoor}
\pi:G\to G/K,
\end{equation}
constitutes a locally trivial $K$-principal bundle in the context of tame Fr\'{e}chet geometry (see Theorem~\ref{maintheorel} below).\ For this we need some fundamental results in functional analysis adjusted to our context.\ Our main reference for such results are \cite{WFrei09} and \cite{zbMATH03787692}.\ We start this section by citing a famous inverse function theorem due to Nash and Moser. 
\begin{theorem}[Nash-Moser Inverse Function Theorem]\label{nashmoser}\cite[Thoerem III.1.1.1]{zbMATH03787692}
	Let $F$ and $G$ be two tame Fr\'{e}chet spaces.\ Let $\phi:U\subseteq F\to G$ be a smooth tame map where $U$ is an open subset of $F$.\
	Suppose that the equation for the derivative $D_{f}\phi(h)=k$ has a unique solution $h=V_{f}\phi(k)$ for all $f\in U$ and all $k$
	and the family of inverses $V\phi:U\times G\to F$ is a smooth tame map.\ Then $\phi$ is locally invertible and each local inverse $\phi^{-1}$ is a smooth tame map.
\end{theorem}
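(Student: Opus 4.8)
The plan is to obtain the local inverse by a Newton-type iteration adapted to the tame Fr\'echet category, in which the central difficulty is the \emph{loss of derivatives}: the hypotheses guarantee only that the family of inverses $V\phi$ is \emph{tame}, so $V_f\phi$ may lower the grading by some fixed amount $r$.\ Consequently the naive scheme $f_{k+1}=f_k+V_{f_k}\phi\bigl(g-\phi(f_k)\bigr)$ would shed $r$ derivatives at each step and fail to converge in any single seminorm, and the whole point of the argument is to defeat this phenomenon.

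First I would fix a base point $f_0\in U$, set $g_0=\phi(f_0)$, and reduce by translation to solving $\phi(f)=g$ for $g$ lying in a small seminorm-ball about $g_0$.\ The graded structure underlying a tame Fr\'echet space yields a family of smoothing operators $S_\theta$ ($\theta\ge1$) obeying the two opposing estimates
\[
\|S_\theta h\|_{n}\le C\,\theta^{\,n-m}\|h\|_{m}\ \ (n\ge m),\qquad \|(I-S_\theta)h\|_{m}\le C\,\theta^{-(n-m)}\|h\|_{n}\ \ (n\ge m),
\]
the first buying control of high-order norms at the cost of a power of $\theta$, the second making the smoothing defect decay in low-order norms.

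The core of the proof is then the \emph{regularized} iteration
\[
f_{k+1}=f_k+S_{\theta_k}\,V_{f_k}\phi\bigl(g-\phi(f_k)\bigr),\qquad \theta_k=\theta_0^{\,(3/2)^k},
\]
whose essential feature is that, by Taylor expansion and the cancellation built into $V_{f_k}\phi$, the new residual $g-\phi(f_{k+1})$ is the sum of a term \emph{quadratic} in the old residual $g-\phi(f_k)$ and a smoothing defect governed by $I-S_{\theta_k}$.\ I would then prove, by simultaneous induction, two estimates: a super-geometric decay of a low-order norm of the residual and a controlled growth of a fixed high-order norm of $f_k$.\ The quadratic gain of Newton's method is made to dominate the $r$-fold derivative loss precisely because $\theta_k$ grows super-geometrically, the two smoothing estimates being balanced against the tame bounds for $\phi$, $D\phi$ and $V\phi$ through interpolation (convexity) inequalities among the seminorms.\ This balancing --- selecting the growth rate of $\theta_k$ and the exponents in the interpolation so that the loss of $r$ derivatives is absorbed while the high-order norms stay bounded --- is the delicate heart of the argument and the step I expect to be the main obstacle.

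Finally, the super-geometric decay forces the sequence $f_k$ to be Cauchy in every seminorm, so $f_k\to f$ with $\phi(f)=g$, which furnishes a local inverse $\phi^{-1}$.\ For its regularity I would differentiate $\phi\circ\phi^{-1}=\mathrm{id}$ to get $D(\phi^{-1})=V_{\phi^{-1}(\cdot)}\phi$, a composite of tame maps; establishing first that $\phi^{-1}$ is continuous and tame, this identity bootstraps to show $\phi^{-1}$ is $C^1$ and tame, and, since $V\phi$ is smooth tame, iterating the bootstrap yields that $\phi^{-1}$ is smooth and tame, as asserted.
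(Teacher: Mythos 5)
First, a point of comparison that matters here: the paper does \emph{not} prove this theorem. It is quoted verbatim from Hamilton \cite[Theorem III.1.1.1]{zbMATH03787692} and used purely as a black box, its only role being to derive the implicit function theorem (Corollary~\ref{nashim}). So the benchmark is the cited proof of Hamilton, and your sketch reconstructs exactly its strategy: smoothing operators, the smoothed Newton iteration $f_{k+1}=f_k+S_{\theta_k}V_{f_k}\phi\bigl(g-\phi(f_k)\bigr)$ with $\theta_k=\theta_0^{(3/2)^k}$, the two-norm induction in which interpolation among the seminorms lets the quadratic gain absorb the fixed loss of $r$ derivatives, and the bootstrap $D(\phi^{-1})_g=V_{\phi^{-1}(g)}\phi$ for smooth tameness of the inverse. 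Your residual decomposition is also correct: $g-\phi(f_{k+1})=D\phi(f_k)\bigl[(I-S_{\theta_k})V_{f_k}\phi\bigl(g-\phi(f_k)\bigr)\bigr]-Q_k$, with $Q_k$ the quadratic Taylor remainder. One thing to state precisely: smoothing operators with your two estimates are not a feature of an arbitrary graded Fr\'echet space; they exist because a \emph{tame} Fr\'echet space is by definition a tame direct summand of a space of exponentially decreasing sequences, and Hamilton's proof begins with precisely that reduction --- this is where tameness of the spaces, as opposed to tameness of the maps, is actually used.

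The one genuine gap is injectivity. Your iteration produces, for each $g$ near $\phi(f_0)$, \emph{some} solution of $\phi(f)=g$, i.e.\ local surjectivity; but that alone does not ``furnish a local inverse'': unless $\phi$ is shown to be injective on a neighborhood of $f_0$, the map $\phi^{-1}$ you propose to bootstrap is not even well defined. Tellingly, your sketch never uses the uniqueness half of the hypothesis --- that $h=V_f\phi(k)$ is the \emph{unique} solution of $D_f\phi(h)=k$ --- which is what makes $V_f\phi$ a two-sided inverse of the derivative and is exactly what injectivity requires. In Hamilton's proof this is a separate and much easier step: writing $\phi(f_1)-\phi(f_2)=\int_0^1 D\phi\bigl(f_2+t(f_1-f_2)\bigr)(f_1-f_2)\,dt$, applying $V_{f_2}\phi$, and using the tame estimates gives a bound of the form $\|f_1-f_2\|_n\le C\|\phi(f_1)-\phi(f_2)\|_{n+r}$ on a sufficiently small ball, which yields injectivity and, as a bonus, the continuity and tame estimate for $\phi^{-1}$ that your bootstrap needs as its starting point. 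A smaller point: ``Cauchy in every seminorm'' does not follow from the low-norm decay alone; you need the controlled growth of arbitrarily high norms in the induction, combined with interpolation, at every grade. With these repairs, your outline matches the cited argument.
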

There are many different versions of the implicit function theorem derived
from the above Nash-Moser inverse function theorem (see e.g. \cite[\S\S III.3.3]{zbMATH03787692}).
Here we give a proof for a simpler version of this theorem which is essentially \cite[Theorem 2.1.3]{WFrei09} (see also \cite[Theorem 2.48]{WFreyn11}) but more
suitable for the purpose of this article.
\begin{cor}[Implicit Function Theorem]\label{nashim}\cite[Theorem 2.1.3]{WFrei09}
	Let $F$, $V$ and $W$ be three tame Fr\'{e}chet spaces.\ Let $\phi:U\subset F\times V\to W$ be a smooth tame map where $U$ is open.\
	Suppose that the partial derivative $D_{y}\phi(x,y)$ has a unique smooth tame right inverse for all $(x,y)\in U$.\ If $\phi(x_{0},y_{0})=0$ for some
	$(x_{0},y_{0})\in U$, then there exists open sets $U'$ and $U''$ in $F$ and $V$ respectively and a unique smooth tame map $\psi:U'\to U''$ such that $\phi(x,\psi(x))=0$.
\end{cor}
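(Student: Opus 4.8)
The plan is to reduce the statement to the Nash--Moser inverse function theorem (Theorem~\ref{nashmoser}) by passing to an augmented map that freezes the $F$-variable. First I would define
\[
\Phi:U\subset F\times V\to F\times W,\qquad \Phi(x,y):=(x,\phi(x,y)).
\]
Since $\phi$ is smooth tame and the projection $F\times V\to F$ together with the inclusions are smooth tame, $\Phi$ is smooth tame, and $\Phi(x_{0},y_{0})=(x_{0},0)$ because $\phi(x_{0},y_{0})=0$. Its derivative at a point $(x,y)\in U$ is
\[
D_{(x,y)}\Phi(h,k)=\big(h,\;D_{x}\phi(x,y)h+D_{y}\phi(x,y)k\big),
\]
which has the ``lower triangular'' shape with the identity in the first slot.

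Next I would build an inverse of this derivative out of the hypothesised inverse of $D_{y}\phi$. Write $R(x,y)$ for the smooth tame right inverse of $D_{y}\phi(x,y)$. The key point is that existence of a right inverse forces $D_{y}\phi(x,y)$ to be onto, while its uniqueness forces $D_{y}\phi(x,y)$ to be one-to-one as well (two distinct right inverses would differ by a nonzero map into the kernel); hence $R(x,y)$ is in fact a two-sided inverse. Consequently, given $(a,b)\in F\times W$, the equation $D_{(x,y)}\Phi(h,k)=(a,b)$ has the unique solution
\[
(h,k)=\big(a,\;R(x,y)(b-D_{x}\phi(x,y)a)\big)=:S(x,y)(a,b).
\]
A direct check gives $D_{(x,y)}\Phi\circ S(x,y)=\mathrm{id}$ and $S(x,y)\circ D_{(x,y)}\Phi=\mathrm{id}$. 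The map $S$ is assembled from $R$, $D_{x}\phi$, scalar operations, and compositions, all of which preserve smooth tameness, so $S:U\times(F\times W)\to F\times V$ is a smooth tame family of inverses, exactly as required by Theorem~\ref{nashmoser}.

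With these hypotheses verified, Theorem~\ref{nashmoser} applies to $\Phi$: it is locally invertible near $(x_{0},y_{0})$ with a smooth tame local inverse $\Psi$ defined on a neighbourhood of $(x_{0},0)$. Because $\Phi$ preserves the first coordinate, so does $\Psi$, i.e. $\Psi(a,b)=(a,\chi(a,b))$ for a smooth tame $\chi$. Shrinking to an open $U'\ni x_{0}$ on which $x\mapsto(x,0)$ lands in the domain of $\Psi$ and with image in a suitable $U''\ni y_{0}$, I would set $\psi(x):=\chi(x,0)$. This $\psi$ is smooth tame as a composition of smooth tame maps, and from $\Phi(\Psi(x,0))=(x,0)$ together with $\Psi(x,0)=(x,\psi(x))$ one reads off $\Phi(x,\psi(x))=(x,\phi(x,\psi(x)))=(x,0)$, hence $\phi(x,\psi(x))=0$ and in particular $\psi(x_{0})=y_{0}$. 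Uniqueness of $\psi$ among such maps follows from the local uniqueness of $\Psi$ as the inverse of $\Phi$.

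I expect the main obstacle to be the verification that the family $S$ is genuinely smooth tame --- one must track the tame estimates through the composition $R(x,y)(b-D_{x}\phi(x,y)a)$ and the smooth tame dependence of $R$ on the base point --- rather than the formal linear algebra, which is routine. A secondary subtlety worth spelling out is the passage from ``right inverse'' to ``two-sided inverse'' via the uniqueness clause, since it is precisely this that supplies the unique solvability of the derivative equation demanded by Theorem~\ref{nashmoser}.
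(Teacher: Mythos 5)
Your proposal follows essentially the same route as the paper's proof: both augment $\phi$ to $\Phi(x,y)=(x,\phi(x,y))$, construct a smooth tame inverse for $D\Phi$ out of the hypothesised right inverse of $D_{y}\phi$ (your formula $S(x,y)(a,b)=\bigl(a,\,R(x,y)(b-D_{x}\phi(x,y)a)\bigr)$ is exactly the paper's matrix (\ref{nashimeq3})), and then invoke Theorem~\ref{nashmoser} together with the classical implicit-function argument to extract $\psi$. If anything, you are more careful than the paper on one point: you justify that the unique right inverse is in fact two-sided, so that the derivative equation has the unique solvability that Theorem~\ref{nashmoser} literally demands, a step the paper passes over silently.
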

\begin{proof}
	Define
	\begin{equation} \label{nashimeq1}
	\begin{cases}
	\Phi: & U\subseteq F\times V\to F\times W \\
	& (x,y)\mapsto(x,\phi(x,y)).
	\end{cases}
	\end{equation}
	By \cite[Lemma 2.1.5]{WFrei09}, $\Phi$ is a smooth tame map (see also \cite[\S II.2]{zbMATH03787692}).\ Moreover, $D\Phi$ can be represented in the following matrix form with respect to the decomposition of the domain and range of $\Phi$:
	\begin{equation} \label{nashimeq2}
	\left(
	\begin{array}{cc}
	Id & D_{x}\phi \\
	0 & D_{y}\phi \\
	\end{array}
	\right).
	\end{equation}
	By hypotheses, the partial derivation $D_{y}\Phi(x,y)$ has a unique tame right linear inverse, $D^{-1}_{y}\Phi$, for all $(x,y)\in U$.\ Define $D^{-1}\Phi:U\subseteq F\times V\to F\times W$ by the following representation:
	\begin{equation} \label{nashimeq3}
	\left(
	\begin{array}{cc}
	Id & -D_{x}\phi D_{y}^{-1}\phi \\
	0 & D_{y}^{-1}\phi \\
	\end{array}
	\right),
	\end{equation}
	which defines a unique right inverse for $D\Phi$.\ Note that since $D_{y}^{-1}\phi:W\to V$, the composition $-D_{x}\phi D_{y}^{-1}\phi$ in (\ref{nashimeq3}) is valid.\ Moreover, since $\Phi$ is a smooth tame map, the family of inverses $D^{-1}\Phi:U\times F\times V\to F\times W$ is smooth and tame and linear with respect to its $F$ and also $V$ components and hence it is smooth and tame with respect to all of its components by \cite[Theorem 3.1.1.]{zbMATH03787692}.
	
	Now the Nash-Moser inverse function theorem, Theorem~\ref{nashmoser}, applied to $\Phi$ implies that $\Phi$ is locally invertible and following similar arguments as in the proof of the classical implicit function theorem, we conclude the corollary (see, e.g., the 2nd part of the proof of \cite[Theorem 2.1.3]{WFrei09}).
\end{proof}

\begin{definition}[Slicing]\label{slicing}
	Let $G$ be a topological group and $K$ be a closed subgroup.\ Endow $G/K$ with the quotient topology.\ A \textit{slicing} $\psi$ for the element $K$ in $G/K$ is a continuous function defined from a neighborhood $N$ around $K$ to $G$ such that $\psi(K)$ is mapped to the identity in $G$ and $\psi$ remains constant on every coset $xK\in N$.\ In other words, $\psi$ is a continuous cross section of the projection map $\pi:G\to G/K$.
\end{definition}
\begin{definition}[Tame Fr\'{e}chet submanifold]\label{tamesubmanifold}
	Let $M$ be a tame Fr\'{e}chet manifold.\ A closed subset $N$ of $M$ is called a \textit{tame Fr\'{e}chet submanifold}\ if for every $x\in N$ there exists a chart $(\phi_{x},U_{x})$ of $M$ containing $x$ such that $\phi_{x}:U_{x}\to F\times W$ for a pair of tame Fr\'{e}chet spaces $F$ and $W$, and in addition,
	\[
	y\in N\cap U_{x}~~ \text{if and only if}~~ \phi_{x}(y)\in 0\times W.
	\]
\end{definition}
Inspired by \cite{zbMATH03098679}, here we prove a slicing function theorem
by means of our version of the implicit function theorem in the tame Fr\'echet context.
\begin{prop}\label{slicingexists}
	Let $G$ be a tame Fr\'{e}chet-Lie group.\ Let $K$ be a closed tame Fr\'{e}chet-Lie subgroup of $G$ such that the quotient space $G/K$ forms a tame Fr\'{e}chet manifold with respect to the quotient topology.\ Then there exists a slicing for $K$. In addition, when the projection is smooth and tame, we obtain a smooth tame slicing of $K$.
\end{prop}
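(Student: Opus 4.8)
The plan is to build the slicing as a local section obtained by inverting the restriction of $\pi$ to a transverse slice through the identity, using the implicit function theorem in the form of Corollary~\ref{nashim}. Throughout I will use that the left translations $\lambda_g\colon G/K\to G/K$ and $L_g\colon G\to G$ are tame diffeomorphisms and that $\pi\circ L_g=\lambda_g\circ\pi$. Writing $\mathfrak{g}=T_eG$ and $\mathfrak{k}=T_eK$, the hypothesis that $K$ is a closed tame Fr\'echet-Lie subgroup and that $G/K$ is a tame Fr\'echet manifold provides a tame topological splitting $\mathfrak{g}=\mathfrak{k}\oplus\mathfrak{m}$, under which $D\pi(e)$ is surjective with kernel $\mathfrak{k}$ and restricts to a tame isomorphism $\mathfrak{m}\xrightarrow{\ \sim\ }T_{eK}(G/K)$.

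First I would fix a chart $c\colon U\to\mathfrak{g}=\mathfrak{k}\oplus\mathfrak{m}$ of $G$ around $e$ with $c(e)=0$ and $Dc(e)=\mathrm{id}$, and define the transverse slice $\sigma\colon O\subseteq\mathfrak{m}\to G$ by $\sigma(X)=c^{-1}(0,X)$, so that $\sigma(0)=e$ and $D\sigma(0)=\iota_{\mathfrak{m}}$. Then $\Pi:=\pi\circ\sigma\colon O\to G/K$ is smooth and tame with $\Pi(0)=eK$ and $D\Pi(0)=D\pi(e)\circ\iota_{\mathfrak{m}}$, which is exactly the tame isomorphism $\mathfrak{m}\cong T_{eK}(G/K)$ noted above. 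Reading everything through a chart $\phi$ of $G/K$ at $eK$ with model space $W$, I would set $\Phi(x,y):=\phi\big(\Pi(y)\big)-x$ on a neighborhood of $(0,0)$ in $W\times\mathfrak{m}$, so that $\Phi(0,0)=0$ and $D_y\Phi(0,0)$ is a tame isomorphism; being a tame isomorphism, its right inverse is unique, as demanded by Corollary~\ref{nashim}. Granting the corollary's hypotheses, it yields a unique smooth tame $\psi_0$ with $\phi(\Pi(\psi_0(x)))=x$; equivalently $\Pi$ is a local tame diffeomorphism onto a neighborhood $N$ of $eK$, and $\psi:=\sigma\circ\Pi^{-1}\colon N\to G$ satisfies $\pi\circ\psi=\mathrm{id}_N$ and $\psi(eK)=\sigma(0)=e$.

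The main obstacle is verifying the standing hypothesis of Corollary~\ref{nashim} (inherited from Theorem~\ref{nashmoser}): that $D_y\Phi(x,y)$, equivalently $D\Pi(X)$, is a tame isomorphism \emph{throughout} a neighborhood and not merely at the base point. In finite dimensions this is automatic because invertibility is an open condition, but in the tame Fr\'echet category tame isomorphisms need not form an open set and no Neumann-series argument is available. My plan to circumvent this is to exploit the homogeneity of $\pi$ rather than a perturbation estimate: from $\pi\circ L_{\sigma(X)}=\lambda_{\sigma(X)}\circ\pi$ one obtains $D\pi(\sigma(X))=D\lambda_{\sigma(X)}(eK)\circ D\pi(e)\circ DL_{\sigma(X)}(e)^{-1}$, where the two translation differentials are tame isomorphisms depending smoothly and tamely on $X$. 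Transporting the slice by the same translations reduces the invertibility of $D\Pi(X)$ to that of $D\Pi(0)$ up to composition with tame isomorphisms, so that the tame inverse of $D\pi(e)|_{\mathfrak{m}}$ is carried to a smooth tame family of inverses along the slice. Establishing that this transported family is genuinely smooth and tame in $X$ — that is, that transversality of the slice to the translated kernels persists tamely — is the delicate analytic heart of the argument, and is where the structure of $G$ as a tame Fr\'echet-Lie group is used in full.

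Once the smooth tame family of inverses is in place, Corollary~\ref{nashim} (or directly Theorem~\ref{nashmoser} applied to $\Phi(x,\cdot)$) produces the section $\psi$ above; since $\psi$ is defined on $N\subseteq G/K$ it is automatically constant on each coset, and is therefore a smooth tame slicing for $K$ in the sense of Definition~\ref{slicing}. The first assertion, existence of a merely continuous slicing, follows from the same construction by retaining only continuity: in the manifold charts $\Pi$ is a homeomorphism onto $N$, so its local inverse $\Pi^{-1}$ is continuous and $\psi=\sigma\circ\Pi^{-1}$ is a continuous local cross section with $\psi(eK)=e$.
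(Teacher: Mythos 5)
Your overall strategy---cut $\pi$ down to a transverse slice through $e$ and invert it with the tame implicit function theorem---is the same as the paper's, and you have correctly located the crux: Corollary~\ref{nashim} demands a smooth tame right inverse of the partial derivative at \emph{every} point of a neighborhood, and in the tame Fr\'echet category invertibility is not an open condition. The problem is that your proposal leaves exactly this step unproved, and the homogeneity device you offer does not deliver it. From $\pi\circ L_{\sigma(X)}=\lambda_{\sigma(X)}\circ\pi$ you get $D\Pi(X)=D\lambda_{\sigma(X)}(eK)\circ D\pi(e)\circ T_X$ with $T_X:=DL_{\sigma(X)^{-1}}(\sigma(X))\circ D\sigma(X)\colon\mathfrak{m}\to\mathfrak{g}$. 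The outer factor is indeed a tame isomorphism, but $T_X$ is not ``$D\Pi(0)$ up to composition with tame isomorphisms'': left translation by $\sigma(X)^{-1}$ carries the coset $\sigma(X)K$ back to $K$, yet it does \emph{not} carry your slice $\sigma(O)$ into itself, so $T_X$ is merely a smooth tame family of injections with $T_0=\iota_{\mathfrak{m}}$. Invertibility of $D\Pi(X)$ is therefore equivalent to the image of $T_X$ staying a topological complement of $\mathfrak{k}$, i.e.\ to inverting a perturbation of $D\pi(e)\circ\iota_{\mathfrak{m}}$---precisely the non-open condition you set out to avoid, now reappearing one step later. Declaring this ``the delicate analytic heart'' and deferring it is a genuine gap: in this category that step is essentially the whole content of the proposition.

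The paper's proof is organized so that the required inverses are supplied by the group structure at every point, never by perturbing the base point. Working in a submanifold chart $\phi\colon U\to F\times W$ with $K\cap U=\phi^{-1}(0\times W)$, it applies Corollary~\ref{nashim} to the two-variable map $\Phi(X,Y)=\bigl(\phi(x^{-1}y)_{F},\,Y_{W}\bigr)$, whose zero set says ``$y\in xK$ and $y$ lies in the fixed slice.'' The key point is that $D_{Y}\Phi_{1}^{X}=\pi_{F}\circ DL_{X}$, where $DL_{X}$ is the chart derivative of the left translation $y\mapsto x^{-1}y$; since translations in a tame Fr\'echet--Lie group are tame diffeomorphisms, $DL_{X}^{-1}\circ\iota_{F}$ gives a right inverse depending smoothly and tamely on \emph{all} $(X,Y)$, while the second component $\pi_{W}$ has the fixed right inverse $Di\colon W\hookrightarrow F\times W$; no openness-of-invertibility statement is ever invoked (the paper is admittedly terse about assembling these two rows into a right inverse of the full derivative, but the mechanism---inverses from translations, valid pointwise everywhere---is exactly the ingredient your argument lacks). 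Note also that in the paper the uniqueness clause of the implicit function theorem is what forces the solution $\Psi$, hence $\psi=\phi^{-1}\circ\Psi\circ\phi$, to be constant on cosets so that it descends to $G/K$; your version gets this for free only because you have already (prematurely) assumed $\Pi$ is invertible. To repair your write-up, replace $\Pi=\pi\circ\sigma$ by the paper's $\Phi$, or equivalently derive your family of right inverses directly from $DL_{X}$ at each $(X,Y)$ instead of transporting the single inverse at $X=0$.
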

\begin{proof}
	Let $(\phi,U)$ be a submanifold chart of $G$ for $K$ at the identity $e\in G$.\ By definition, there are two tame Fr\'{e}chet spaces $F$ and $W$ such that
	\[
	\phi:U\to F\times W=:H.
	\]
	We can assume that $\phi(e)=0\in H$ and since $G$ is a tame Fr\'{e}chet-Lie group, we can also assume that $U\cdot U\subseteq U$.\ Moreover, from the definition submanifold charts we know that $x\in K\cap U$ if and only if $\phi(x)\in 0\times W$.\\
	Define $\widetilde{U}:=\phi(U)$, $X:=\phi(x)$ for all $x\in U$ and
	\begin{equation} \label{slicingexistseq1}
	\begin{cases}
	&\Phi:\widetilde{U}\times \widetilde{U}\subseteq H\times H \to H=F\times W \\
	& (X,Y)\mapsto(\Phi_{1}(X,Y),\Phi_{2}(X,Y)),
	\end{cases}
	\end{equation}
	where
	\begin{equation} \label{slicingexistseq2}
	\begin{cases}
	&\Phi_{1}:\widetilde{U}\times \widetilde{U}\to F\\
	&(X,Y)\mapsto\pi_{F}\circ\phi(x^{-1}y)=:\phi(x^{-1}y)_{F},
	\end{cases}
	\end{equation}
	and
	\begin{equation} \label{slicingexistseq3}
	\begin{cases}
	&\Phi_{2}:  \widetilde{U}\times \widetilde{U}\to W \\
	&(X,Y)\mapsto\pi_{W}(Y)=:Y_{W},
	\end{cases}
	\end{equation}
	where $\pi_{F}$ and $\pi_{W}$ are the natural projections of $H$ onto $F$ and $W$ respectively.\ Note that all these maps are clearly smooth and tame (\cite[\S II.2]{zbMATH03787692}).\\
	In the above setting, the partial derivative $D_{Y}\Phi(X,Y)$ has the following matrix representation:
	\begin{equation} \label{slicingexistseq4}
	\left(
	\begin{array}{cc}
	D_{Y}\Phi_{1}^{X}\\
	D_{Y}\Phi_{2} \\
	\end{array}
	\right),
	\end{equation}
	where
	$\Phi_{1}^{X}:=\Phi(X,-)$.\\
	First note that $D_{Y}\Phi_{2}:H\to W$ is just the projection onto $W$ with a unique smooth tame right inverse $Di:W\hookrightarrow F\times W$.\\
	Let $DL_{X}:H\to F\times W$ be the derivative of the left translation in $G$ by $x^{-1}\in U$ which is a smooth tame linear isomorphism.\ Note that since $G$ is a tame Fr\'{e}chet-Lie group, the left and right translations are tame diffeomorphisms as well as the inverse map.\ We ignore the role of the inverse map in the definition of $\Phi$ for the same reason and simplicity.\ Then we obtain the following matrix representation for $DL_{X}$:
	\begin{equation} \label{slicingexistseq5}
	\left(
	\begin{array}{cc}
	(DL_{X})_{F}\\
	(DL_{X})_{W}\\
	\end{array}
	\right).
	\end{equation}
	Therefore, $(DL_{X})_{F}= D_{Y}\Phi_{1}^{X}$ has a unique smooth tame right inverse, namely $D_{Y}^{-1}\Phi_{1}^{X}$.\ This enables us to define a unique smooth tame right inverse $D_{Y}^{-1}\Phi(X,Y)$ for $D_{Y}\Phi(X,Y)$ for all $X,Y\in\widetilde{U}$ with the following matrix representation:
	\begin{equation} \label{slicingexistseqinverse0}
	\left(
	\begin{array}{cc}
	D_{Y}^{-1}\Phi_{1}^{X} & Di\\
	\end{array}
	\right).
	\end{equation}
	By assumption, for $X_{0}:=\phi(e)$ we have $\Phi(X_{0},X_{0})=0$.\ Now our version of the implicit function theorem, Theorem~\ref{nashim}, implies that there exists a unique smooth tame map $\Psi:\widetilde{U}'\subseteq\widetilde{U}\to\widetilde{U}''\subseteq\widetilde{U}$ such that
	\begin{equation}\label{slicingexistseqinverse1}
	\Phi(X,\Psi(X))=0.
	\end{equation}
	We claim that the map
	\begin{equation} \label{slicingfunctioneq0}
	\begin{cases}
	&\psi:\phi^{-1}(\widetilde{U}')\to\phi^{-1}(\widetilde{U}'')\\
	&\psi:=\phi^{-1}\circ\Psi\circ\phi,\\
	\end{cases}
	\end{equation}
	defines the desirable smooth tame slicing function for $K$.\\
	To see this, first note that for any $X,Y\in\widetilde{U}$ with the property $\Phi(X,Y)=0$ we have
	\begin{itemize}
		\item $\phi(x^{-1}y)_{F}=0~~\Rightarrow~~x^{-1}y\in K\cap U~~\Rightarrow~~y=xk$ for some $k\in K$;
		\item $\phi(y)_{G}=0~~\Rightarrow~~y\in G\backslash K$.
	\end{itemize}
	We conclude from above that by the uniqueness of $\Psi$, $y$ is the only element in $xK\cap G\backslash K$. And hence, $\psi$ remains constant along $xK$ for every $x$ in the domain of $\psi$.\ Therefore, by reduction of the domain, $\psi:V\to G$ can be defined where $V:=\pi(\phi^{-1}(\widetilde{U}'))$ which remains an open subset of $G/K$ containing $K$ with respect to the quotient topology on $G/K$.
\end{proof}
\begin{definition}[Tame Fr\'{e}chet fibre bundle]\label{tamebundle}
	Let $\pi:P\to M$ be a (topological) fibre bundle between tame Fr\'{e}chet manifolds.\ We call $P$ a tame Fr\'{e}chet fibre bundle over $M$ if $\pi$ is a smooth tame map and the bundle satisfies the following condition:\\
	Every $x\in M$ contains in a chart domain $(\phi_{x},U_{x})$ with values in an open subset $U$ of a tame Fr\'{e}chet space $F$ such that there exist a tame Fr\'{e}chet manifold $N$ (the typical fibre) and a smooth tame map $\psi:\pi^{-1}(U_{x})\to U\times N$ such that the following diagram commutes:
	\begin{equation} \label{tamebundlediag}
	\xymatrix{
		\pi^{-1}(U_{x}) \ar[d]_{\pi} \ar[r]^{\psi} &U\times N\ar[d]^{\pi_1}\\
		U_{x} \ar[r]^{\phi} &U}
	\end{equation}
	and $\pi^{-1}(y)\cong N$ as tame Fr\'{e}chet manifolds for every $y\in U_{x}$.\ A tame Fr\'{e}chet fibre bundle is locally trivial if $\psi$ is a tame diffeomorphism.\ A tame Fr\'{e}chet fibre bundle is called a tame Fr\'{e}chet principal $N$-bundle when the typical fibre $N$ is a tame Fr\'{e}chet-Lie group.
\end{definition}
\begin{prop}\label{propprincbundle}
	Let $G$ be a tame Fr\'{e}chet-Lie group.\ Let $K$ be a tame Fr\'{e}chet-Lie subgroup of $G$ such that the quotient space $G/K$ forms a tame Fr\'{e}chet manifold with the quotient topology.\ Then the natural projection $\pi:G\to G/K$ forms a locally trivial principal $K$-bundle.\ Moreover, if $\pi$ is a smooth tame map then it is a locally trivial tame Fr\'{e}chet principal $K$-bundle over $G/K$.
\end{prop}
\begin{proof}
	Proposition~\ref{slicingexists} and \cite[Theorem 1\'{}]{zbMATH03098679} imply that $\pi:G\to G/K$ is a principal $K$-bundle.\\
	To show that $\pi$ is a locally trivial tame Fr\'{e}chet fibre bundle by the homogeneity of $G/K$ it suffices to show that for a chart $(\phi,U_{e})$ at $K$ of $G/K$, the tame Fr\'{e}chet condition (\ref{tamebundlediag}) holds for some $\psi:\pi^{-1}(U_{e})\to U\times K$.\ Under the identification $\pi^{-1}(x)\cong_{tame}x\times K$ define
	\begin{equation}\label{trivialization00}
	\psi(x,k)=(x,s(x)^{-1}\cdot k),
	\end{equation}
	where $s$ denotes the smooth tame cross section of $K$ obtained in Proposition~\ref{slicingexists}.\ Since by the hypothesis $\pi$ is a smooth tame map, we conclude that $\psi$ is a smooth tame map with a smooth tame inverse defined as follows:
	\begin{equation}\label{trivialization01}
	\psi^{-1}(x,k)=s(x)\cdot k.
	\end{equation}
	This completes the proof.
\end{proof}
\begin{observation}\label{comformsunmnfld}
	Let $G/K$ be an affine Kac-Moody symmetric space as defined in the introduction (see \cite{WFrei09} for details).\ Then $K$ is a tame Fr\'{e}chet submanifold of $G$ by means of the explicit charts given in \cite[Chapters 3 and 4]{WFrei09}.\ Moreover, the construction of such charts also shows that $\pi:G\to G/K$ is a smooth tame submersion in the sense of \cite[Definition 4.4.8]{zbMATH03787692}.
\end{observation}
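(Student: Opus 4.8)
The plan is to produce, at the identity, a single \emph{Cartan-type chart} that simultaneously exhibits $K$ as a tame submanifold and displays $\pi$ as a projection, and then to spread this chart over all of $K$ and all of $G$ by the homogeneity built into the symmetric space. First I would fix the Cartan--Chevalley involution $\Theta$ of $G$ whose fixed-point subgroup is $K$, and pass to its differential $\theta$ on the tame Fr\'echet Lie algebra $\mathfrak{g}$. Since $\theta$ is a continuous linear involution, it splits $\mathfrak{g}$ into its $(\pm1)$-eigenspaces $\mathfrak{g}=\mathfrak{k}\oplus\mathfrak{p}$, where $\mathfrak{k}=\mathrm{Lie}(K)$ is the $(+1)$-eigenspace and $\mathfrak{p}$ the $(-1)$-eigenspace. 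The one genuinely infinite-dimensional input I would draw from \cite{WFrei09,WFreyn11} is that this is a \emph{tame} direct sum, i.e.\ that the two eigenprojections are tame linear maps; granting this, $\mathfrak{k}$ and $\mathfrak{p}$ are themselves tame Fr\'echet spaces.

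Next I would build the chart. Consider the Cartan product map $\mu(X,Y)=\exp(X)\cdot\exp(Y)$ with $X\in\mathfrak{p}$ and $Y\in\mathfrak{k}$, using the (tame, smooth) exponential of $G$ provided by the explicit coordinates of \cite[Chapters 3 and 4]{WFrei09}. Its differential at $(0,0)$ is $(\dot X,\dot Y)\mapsto \dot X+\dot Y$, which is precisely the splitting isomorphism $\mathfrak{p}\times\mathfrak{k}\to\mathfrak{g}$; hence by the Nash--Moser inverse function theorem, Theorem~\ref{nashmoser}, $\mu$ is a tame diffeomorphism from a neighbourhood of $(0,0)$ onto a neighbourhood $U$ of $e$, and I take $\phi:=\mu^{-1}\colon U\to\mathfrak{p}\times\mathfrak{k}$. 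To verify the condition of Definition~\ref{tamesubmanifold}, note that for $x=\exp(X)\exp(Y)\in U$ we have $\exp(Y)\in K$, so $x\in K$ iff $\exp(X)\in K=\mathrm{Fix}(\Theta)$; the equivariance $\Theta\circ\exp=\exp\circ\,\theta$ together with injectivity of $\exp$ near $0$ forces $\theta X=X$, and since $X\in\mathfrak{p}$ satisfies $\theta X=-X$ this gives $X=0$. Thus $x\in K\cap U\iff\phi(x)\in 0\times\mathfrak{k}$, with $F=\mathfrak{p}$ and $W=\mathfrak{k}$. To obtain such a chart at an arbitrary $k_{0}\in K$ I would left-translate by $k_{0}^{-1}$: since $K$ is a subgroup and left translations are tame diffeomorphisms of $G$, the translated chart retains the submanifold form, so $K$ is a tame Fr\'echet submanifold of $G$.

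For the submersion statement I would read off the local form of $\pi$ in the very same chart. Because $\exp(Y)\in K$, one has $\pi\circ\mu(X,Y)=\exp(X)\exp(Y)K=\exp(X)K$, so $G/K$ is modelled near $eK$ by the slice $\mathfrak{p}$ through $X\mapsto\exp(X)K$, and in these coordinates $\pi$ becomes the projection $\mathfrak{p}\times\mathfrak{k}\to\mathfrak{p}$, $(X,Y)\mapsto X$. Its derivative is the same projection, which is surjective and admits the tame linear right inverse $X\mapsto(X,0)$; this is exactly a smooth tame submersion in the sense of \cite[Definition 4.4.8]{zbMATH03787692}. Globally, smoothness and tameness of $\pi$ are inherited from the fact that the geometry on $G/K$ is, by construction in \cite{WFrei09}, the one induced through $\pi$, and the local triviality then also feeds into Proposition~\ref{slicingexists} and Theorem~\ref{princbundle}.

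The hard part will be the infinite-dimensional bookkeeping hidden in the first paragraph, namely establishing that the eigenspace splitting $\mathfrak{g}=\mathfrak{k}\oplus\mathfrak{p}$ and the exponential map are \emph{tame}, not merely continuous. In the algebraic or Banach category both facts are automatic, but in the tame Fr\'echet category one must control the gradings coming from the Fourier/loop expansion of the completed affine algebra and check that $\theta$, the eigenprojections, and $\exp$ respect the associated tame estimates. This is precisely the content of the explicit loop coordinates of \cite{WFrei09} and \cite[Sections 4.2.2 and 4.5.1]{WFreyn11}, which is why I would record the statement as an observation resting on those constructions; once tameness of the splitting and of the Cartan chart is granted, Theorem~\ref{nashmoser} supplies the local diffeomorphism and everything else is formal.
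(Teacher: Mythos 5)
Your overall strategy --- a Cartan-type chart at the identity built from the eigenspace splitting $\mathfrak{g}=\mathfrak{k}\oplus\mathfrak{p}$ and the exponential map, then spread over $K$ and $G$ by translation --- is in the spirit of the constructions the paper leans on; note that the paper itself gives no argument for this Observation at all, resting entirely on the explicit charts of \cite[Chapters 3 and 4]{WFrei09} and \cite[Sections 4.2.2 and 4.5.1]{WFreyn11}. So what has to be judged is whether your reconstruction of those charts is sound, and there it contains one genuinely invalid step.

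You invoke Theorem~\ref{nashmoser} after computing the differential of $\mu(X,Y)=\exp(X)\exp(Y)$ only at the single point $(0,0)$. The Nash--Moser theorem, as stated in the paper, requires the derivative $D\mu$ to be invertible at \emph{every} point of an open neighbourhood, with the family of inverses a smooth tame map; in the tame Fr\'echet category invertibility at one point does not propagate to a neighbourhood, since there is no Neumann-series or open-mapping argument available. This is not a pedantic complaint: the exponential map of $\mathrm{Diff}(\mathcal{S}^{1})$ is smooth, has derivative equal to the identity at the origin, and yet is not locally surjective --- exactly because the family-of-inverses hypothesis fails. Worse, for the groups at hand the hypothesis is delicate for a concrete reason: the model spaces consist of holomorphic maps on the non-compact domain $\mathds{C}^{*}$, with a topology locally compatible with the compact-open one, so a neighbourhood of $0$ constrains $X$ only on a compact annulus, and $d\exp_{X}$ can degenerate at points $z$ outside that annulus where $\operatorname{ad}X(z)$ acquires an eigenvalue in $2\pi i\mathds{Z}\setminus\{0\}$. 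Hence tame invertibility of $D\mu$ on a whole neighbourhood (and the treatment of the two extension directions of the geometric Kac-Moody group, which your chart ignores entirely) is itself part of what the explicit charts of \cite{WFrei09} and \cite{WFreyn11} must supply; the deferral cannot be limited, as you frame it, to tameness of $\theta$, of the eigenprojections, and of $\exp$ as maps. As written, the inference ``the derivative is an isomorphism at $(0,0)$, hence Nash--Moser gives a local tame diffeomorphism'' is a step that would fail; the remainder of your argument (the verification of the submanifold condition of Definition~\ref{tamesubmanifold}, the translation argument, and reading off the submersion in the chart) is fine once that local chart is actually granted.
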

\begin{theorem}\label{maintheorel}
	Let $G/K$ be an affine Kac-Moody symmetric space.\ Then
	\begin{equation}\label{eqthmmain}
	\pi:G\to G/K,
	\end{equation}
	is a locally trivial tame Fr\'{e}chet principal $K$-bundle.\ In particular, if $G/K$ is of non-compact type, then $G\cong G/K\times K$.
\end{theorem}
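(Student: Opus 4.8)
The plan is to obtain the bundle statement as an immediate specialization of the general machinery built above, and to handle the non-compact case by exhibiting a \emph{global} section coming from a Cartan-type decomposition. For the first assertion I would simply verify the hypotheses of Theorem~\ref{princbundle}. By the main results of \cite{WFrei09} (see also \cite{WFreyn11,WFrei14}), the geometric affine Kac-Moody group $G$ is a tame Fr\'echet-Lie group, its ``compact'' form $K$ is a closed tame Fr\'echet-Lie subgroup, and the quotient $G/K$ carries the structure of a tame Fr\'echet manifold for the quotient topology. Observation~\ref{comformsunmnfld} supplies the two remaining ingredients: that $K$ is in fact a tame Fr\'echet submanifold of $G$ (so that the submanifold chart required in Proposition~\ref{slicingexists} is available at the identity), and that $\pi$ is a smooth tame submersion. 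Feeding these into Theorem~\ref{princbundle} in its ``smooth tame'' form yields at once that $\pi:G\to G/K$ is a locally trivial tame Fr\'echet principal $K$-bundle, which is the main conclusion.

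For the refinement in the non-compact case I would reduce global triviality to the existence of a global smooth tame section. The trivialization produced in the proof of Theorem~\ref{princbundle} is local only because the slicing function $s$ of Proposition~\ref{slicingexists} is defined on a neighborhood of $K$; if $s$ extends to all of $G/K$, then the map $(x,k)\mapsto s(x)\cdot k$ of \eqref{trivialization01} is a tame diffeomorphism $G/K\times K\to G$ with tame inverse \eqref{trivialization00}, giving $G\cong G/K\times K$. Thus everything comes down to globalizing the section.

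The mechanism for this globalization is the Cartan decomposition. In the finite-dimensional model a symmetric space of non-compact type is diffeomorphic to the vector part $\mathfrak{p}$ of the splitting $\mathfrak{g}=\mathfrak{k}\oplus\mathfrak{p}$ determined by the Cartan involution, via $\exp|_{\mathfrak{p}}:\mathfrak{p}\to G/K$; the affine Kac-Moody symmetric spaces inherit this feature from their construction. I would therefore invoke the global Cartan decomposition $G=\exp(\mathfrak{p})\cdot K$ of \cite{WFrei09} to obtain a tame diffeomorphism $\mathfrak{p}\to G/K$, so that $G/K$ is contractible, and to read off the global section directly as $s(\pi(\exp X)):=\exp X$ for $X\in\mathfrak{p}$. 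Combined with the previous paragraph this completes the proof.

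The main obstacle is precisely this last step: passing from \emph{local} to \emph{global} triviality in a way that respects the tame Fr\'echet category. Contractibility of $G/K$ alone would, by standard obstruction theory, guarantee only a continuous section, whereas I need a smooth \emph{tame} one. The point to be handled carefully is therefore the tameness of $\exp|_{\mathfrak{p}}$ and of its inverse in the infinite-dimensional completion, which is what licenses replacing the abstract section by the explicit Cartan one; once that tameness is secured the remaining verifications are routine substitutions into \eqref{trivialization00} and \eqref{trivialization01}.
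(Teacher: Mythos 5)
Your first paragraph is, in substance, exactly the paper's own argument for the main assertion: the paper deduces local triviality by combining Theorem~\ref{princbundle} with Observation~\ref{comformsunmnfld} (its citation of ``Proposition~\ref{tamebundle}'' is evidently a slip for Theorem~\ref{princbundle}), and you verify precisely those hypotheses, so that half of the proposal is correct and identical in approach. Where you genuinely diverge is the non-compact case. The paper constructs no global section at all: it cites \cite[Corollary 4.4.1]{WFrei09} (see also \cite[Theorem 1.2]{WFreyn11}) only for the fact that $G/K$ is diffeomorphic to a topological vector space, hence contractible, and then invokes the standard fact that a locally trivial principal bundle over a contractible (paracompact) base is trivial, concluding $G\cong G/K\times K$ \emph{as topological spaces}. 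Your route instead builds an explicit global slicing $s(\pi(\exp X))=\exp X$ out of a global Cartan decomposition $G=\exp(\mathfrak{p})\cdot K$ and feeds it into (\ref{trivialization00}) and (\ref{trivialization01}); if that decomposition is available in the tame category, this buys a strictly stronger conclusion (a tame diffeomorphism rather than a homeomorphism). Your observation that contractibility alone yields only a continuous trivialization is accurate, but it is not a defect of the paper's argument, since the paper's conclusion is explicitly only topological.

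The caveat is that your argument rests on an input you flag but do not secure: that the factorization $G=\exp(\mathfrak{p})\cdot K$ exists in the completed setting and that $\exp|_{\mathfrak{p}}$, together with the $\mathfrak{p}$- and $K$-components of the decomposition, is smooth tame with smooth tame inverse. This is not automatic in the Fr\'echet category, where exponential maps can fail even to be locally surjective, and the result the paper actually cites from \cite{WFrei09} guarantees only a diffeomorphism of $G/K$ with a topological vector space, not literally a tame Cartan decomposition of $G$ with tame dependence of the factors. If Freyn's diffeomorphism is indeed implemented by $X\mapsto\pi(\exp X)$, as in finite dimensions, your section is well defined and your proof closes, proving more than the paper does; otherwise this step would have to be re-proved from scratch. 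So: the bundle statement is handled correctly and identically to the paper, while the triviality statement is attacked by a genuinely different, more ambitious route whose key analytic input is asserted rather than established --- the paper's softer topological argument sidesteps that issue entirely.
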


\begin{proof}
	The theorem is a direct consequence of Proposition~\ref{propprincbundle} and Observation~\ref{comformsunmnfld}.\ In particular, when $G/K$ is of non-compact type by \cite[Corollary 4.4.1]{WFrei09}, it is diffeomorphic to a topological vector space and hence contractible.\ Because $\pi:G\to G/K$ is a locally trivial principle bundle by the first part of the theorem, when $G/K$ is of non-compact type then $G\cong G/K\times K$ as topological spaces.
\end{proof}


\begin{cor}\label{maincor}
	Let $G/K$ be an affine Kac-Moody symmetric space.\ Then
	the topological gauge group of $\pi$, namely, $\text{Aut}_{G/K}(G)$ is isomorphic to the topological group
	of cross sections of the  locally trivial tame Fr\'{e}chet principal $K$-bundle
	\begin{equation}\label{eqthmmaincor01}
	\text{Ad}(\pi):\text{Ad}(G):=G\times_{\text{Ad}}K\to G/K,
	\end{equation}
	where $\text{Ad}(\pi)$ is constructed by considering the conjugation action of $K$ on itself.\\
	Moreover, if $G/K$ is of non-compact type, then $\text{Aut}_{G/K}(G)$ is isomorphic to the topological mapping group $\text{Map}(G/K,K)$.
\end{cor}
\begin{proof}
	Since, by Theorem~\ref{maintheorel}, $\pi$ is a locally trivial tame Fr\'{e}chet principal $K$-bundle
	and $K$ is a tame Fr\'{e}chet-Lie group it follows that $\text{Ad}(\pi)$ is indeed a locally trivial tame Fr\'{e}chet principal $K$-bundle.\ Let $\Gamma(\text{Ad}(\pi))$ denote the topological group
	of cross sections of $\text{Ad}(\pi)$.\ Then
	\[
	\text{Aut}_{G/K}(G)\cong\Gamma(\text{Ad}(\pi))\]
	is an immediate consequence of \cite[Remark 7.1.4 and Remark 7.1.6]{zbMATH00496152}.\ Moreover, when $G/K$ is of non-compact type, then by Theorem~\ref{maintheorel}, $\pi$ is trivial and hence the isomorphism
	\[
	\text{Aut}_{G/K}(G)\cong\text{Map}(G/K,K)
	\]
	follows from \cite[Proposition 7.1.7]{zbMATH00496152}.
\end{proof}

\section{Loop spin structures}\label{secELB}
Affine Kac-Moody algebras and groups have celebrated loop realizations (see e.g., \cite[Chapter 7]{zbMATH00194085}).\ Inspired by this, we establish a connection between geometric affine Kac-Moody groups and smooth loop groups.\ By doing so, we manage to introduce and construct certain loop structures on affine Kac-Moody symmetric spaces.\ The main reference we use for the concepts mentioned here related to loop groups and algebras is \cite{zbMATH04002404}.\ We also frequently use the notion of compact-open $C^{\infty}$-topology.\ The main references that we use here for this notion and related concepts are \cite{zbMATH05503386,zbMATH05294708}.
 
Let $G_{\mathds{R}}$ be a linear connected semisimple compact Lie group.\ It essentially follows form the Peter-Weyl theorem that $G_{\mathds{R}}$
posses a unique (up to isomorphism) complexification $G_{\mathds{C}}$ (see \cite[\S 2.3]{zbMATH04002404}).\ Therefore, in this setting, we always consider $\mathds{C}^{*}:=\mathds{C}\backslash\{0\}$ as the unique linear complexification of the circle $\mathcal{S}^{1}$.\ Define
\[
\mathds{C}^{*}G_{\mathds{C}}:=\{f:\mathds{C}^{*}\to G_{\mathds{C}}~|~ f~\text{holomorphic}\},
\]
and 
\[
\mathds{C}^{*}G_{\mathds{R}}:=\{f:\mathds{C}^{*}\to G_{\mathds{C}}~|~ f~\text{holomorphic~and~}f(\mathcal{S}^{1})\subset G_{\mathds{R}}\}.
\]
Let $LG_{\mathds{R}}$ denote the free loop group of $G_{\mathds{R}}$.\ Recall from \cite[Chapter 3]{zbMATH04002404} that $LG_{\mathds{R}}$ is an infinite dimensional, paracompact manifold
modeled on $L\mathds{R}^{n}$ with respect to the (compact-open) $C^{\infty}$-topology, i.e., the topology of 
uniform convergence of the functions and all their partial derivatives of 
all orders.\ Moreover, with its natural group structure, and the above topology, $LG_{\mathds{R}}$ constitutes a loop group (see also \cite[Definition 1.1]{zbMATH05294708})).\ Now since $\mathds{C}^{*}$ is the complexification of $\mathcal{S}^{1}$, we define 
\begin{equation}\label{eq1to1}
\begin{cases}
H:&\mathds{C}^{*}G_{\mathds{R}} \hookrightarrow LG_{\mathds{R}} \\
&f\mapsto f|_{\mathcal{S}^{1}},
\end{cases}
\end{equation}

\begin{lemma}\label{lemcontembpre}
	$H$ as defined above is a continuous injection.
\end{lemma}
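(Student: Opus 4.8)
The plan is to show that $H$ is well-defined, injective, and continuous, treating each property in turn. The map sends a holomorphic function $f:\mathds{C}^{*}\to G_{\mathds{C}}$ satisfying $f(\mathcal{S}^{1})\subset G_{\mathds{R}}$ to its restriction $f|_{\mathcal{S}^{1}}$. Well-definedness is immediate: since $f$ is holomorphic on $\mathds{C}^{*}$ it is in particular smooth along $\mathcal{S}^{1}$, and the hypothesis $f(\mathcal{S}^{1})\subset G_{\mathds{R}}$ guarantees the restriction lands in the free loop group $LG_{\mathds{R}}$. The real content is injectivity, and the continuity is then a matter of comparing topologies.

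For injectivity, the key tool is the \emph{identity theorem} for holomorphic functions. **First I would** reduce to scalars: elements of $G_{\mathds{C}}$ sit inside some matrix algebra (since $G_{\mathds{R}}$ is linear, with complexification $G_{\mathds{C}}$), so a holomorphic map into $G_{\mathds{C}}\subset \mathrm{Mat}_{N}(\mathds{C})$ amounts to an $N\times N$ matrix of scalar holomorphic functions on $\mathds{C}^{*}$. Suppose $H(f)=H(g)$, i.e.\ $f$ and $g$ agree on $\mathcal{S}^{1}$. The unit circle $\mathcal{S}^{1}$ has an accumulation point in $\mathds{C}^{*}$ (indeed every point of $\mathcal{S}^{1}$ is such a point), so entrywise the two holomorphic functions agree on a set with an accumulation point inside the domain $\mathds{C}^{*}$. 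The identity theorem then forces $f=g$ on all of $\mathds{C}^{*}$, giving injectivity. The mild subtlety is that $\mathds{C}^{*}$ is not simply connected, but this is irrelevant: the identity theorem only needs the domain to be connected, which $\mathds{C}^{*}$ is.

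For continuity, I would unwind the two topologies. The target $LG_{\mathds{R}}$ carries the compact-open $C^{\infty}$-topology, i.e.\ uniform convergence of the loop together with all its derivatives on the compact set $\mathcal{S}^{1}$. The source $\mathds{C}^{*}G_{\mathds{R}}$ carries the topology locally compatible with the compact-open topology inherited from holomorphic functions on $\mathds{C}^{*}$, which in particular controls uniform convergence on compact subsets of $\mathds{C}^{*}$. Since $\mathcal{S}^{1}$ is a compact subset of $\mathds{C}^{*}$, if a net $f_{\alpha}\to f$ in the source then $f_{\alpha}\to f$ uniformly on a neighborhood annulus of $\mathcal{S}^{1}$; by the Cauchy estimates, uniform convergence of holomorphic functions on such an annulus yields uniform convergence of all complex derivatives on $\mathcal{S}^{1}$, which dominates the tangential ($C^{\infty}$ along the circle) derivatives appearing in the target topology. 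Hence $f_{\alpha}|_{\mathcal{S}^{1}}\to f|_{\mathcal{S}^{1}}$ in $LG_{\mathds{R}}$, proving $H$ continuous.

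**The hard part will be** making the continuity estimate fully precise, specifically passing from control of the complex (radial) derivatives given by the source topology to control of the loop derivatives along $\mathcal{S}^{1}$ that define the target topology. The Cauchy integral estimates on a thin annulus around $\mathcal{S}^{1}$ are the natural bridge, and the chain rule relating $\frac{d}{d\theta}f(e^{i\theta})$ to $f'$ is elementary; so while this is the step requiring genuine care, it is a standard complex-analytic argument rather than a conceptual obstacle. By contrast, injectivity via the identity theorem is essentially immediate once the reduction to matrix entries is in place.
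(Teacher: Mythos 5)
Your proof is correct and takes essentially the same route as the paper: injectivity via the identity (uniqueness) theorem for holomorphic functions, and continuity by comparing the compact-open topology on $\mathds{C}^{*}G_{\mathds{R}}$ with the $C^{\infty}$-topology on $LG_{\mathds{R}}$. The only difference is one of presentation: where the paper cites references for the facts that the Fr\'echet topology is locally compatible with the compact-open topology and that the compact-open and $C^{\infty}$-topologies coincide for holomorphic maps, you prove the latter directly via Cauchy estimates on an annulus around $\mathcal{S}^{1}$, which is exactly the content of the cited results.
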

\begin{proof}
	By the uniqueness theorem for holomorphic functions, any $f\in\mathds{C}^{*}G_{\mathds{R}}$ is completely and uniquely determined by
	its values on $\mathcal{S}^{1}$.\ This implies that $H$ is well-defined and injective.
	
	To see that $H$ is continuous first note that, by \cite[Theorem 4.7]{WFreyn11}, the Fr\'echet topology on $\mathds{C}^{*}G_{\mathds{R}}$ is (locally)
	compatible with the compact-open topology.\ And the compact-open topology coincides
	with the $C^{\infty}$-topology for holomorphic functions (cf.  \cite{zbMATH05294708}, see also \cite[\S I.5]{zbMATH05503386}).\ But the topology
	on $LG_{\mathds{R}}$ is also the $C^{\infty}$-topology, hence continuity. 
\end{proof}

Let $\widehat{\mathds{C}^{*}G_{\mathds{R}}}/ \widehat{\mathds{C}^{*}K_{\mathds{R}}}$ be an affine
Kac-Moody symmetric space of compact type defined in \cite[\S 4.3]{WFrei09} (or see \cite[\S 5.4]{WFreyn11}).\ We call the tame Fr\'{e}chet manifolds $\mathds{C}^{*}G_{\mathds{R}}/ \mathds{C}^{*}K_{\mathds{R}}$,  \textit{primitive affine Kac-Moody symmetric spaces}.\ Recall that here $K_{\mathds{R}}$ is the fixed point subgroup of $G_{\mathds{R}}$ associated to a Chevalley Cartan involution on $G_{\mathds{R}}$.\ Here we always assume that $K_{\mathds{R}}$ is connected.\ Let $H$ be as in (\ref{eq1to1}), we consider the following commutative diagram of topological embeddings:
\begin{equation} \label{homeoloopp}
\xymatrix{
	\mathds{C}^{*}G_{\mathds{R}} \ar@{^{(}->}[r]^H & LG_{\mathds{R}}\\
	\mathds{C}^{*}K_{\mathds{R}} \ar@{^{(}->}[r]^H \ar@{^{(}->}[u] & LK_{\mathds{R}} \ar[u]\ar@{^{(}->}[u]}
\end{equation}
This yields the following continuous mapping for the corresponding quotient spaces:
\begin{equation}\label{eqliftablles01477}
\mathds{C}^{*}G_{\mathds{R}}/\mathds{C}^{*}K_{\mathds{R}}\to LG_{\mathds{R}}/LK_{\mathds{R}}.
\end{equation}
\begin{lemma}\label{liftlopembed}
	The continuous mapping (\ref{eqliftablles01477}) is injective.
\end{lemma}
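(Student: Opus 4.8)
The plan is to show injectivity of the induced map on quotients by reducing it to a statement about cosets. Suppose two elements $f_1, f_2 \in \mathds{C}^{*}G_{\mathds{R}}$ determine the same point in $LG_{\mathds{R}}/LK_{\mathds{R}}$ after applying $H$; that is, $H(f_1)$ and $H(f_2)$ lie in the same $LK_{\mathds{R}}$-coset. I must deduce that $f_1$ and $f_2$ already lie in the same $\mathds{C}^{*}K_{\mathds{R}}$-coset. Concretely, the hypothesis says there is a free loop $\gamma \in LK_{\mathds{R}}$ with $H(f_1) = H(f_2)\cdot\gamma$, i.e. $f_1|_{\mathcal{S}^{1}} = (f_2|_{\mathcal{S}^{1}})\cdot\gamma$ as maps $\mathcal{S}^{1}\to G_{\mathds{R}}$, so that $\gamma = (f_2|_{\mathcal{S}^{1}})^{-1}\,(f_1|_{\mathcal{S}^{1}})$.

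First I would observe that the loop $\gamma$, a priori only a smooth map $\mathcal{S}^{1}\to K_{\mathds{R}}$, is in fact the restriction to $\mathcal{S}^{1}$ of a holomorphic map on all of $\mathds{C}^{*}$. Indeed, both $f_1$ and $f_2$ are holomorphic on $\mathds{C}^{*}$ with values in $G_{\mathds{C}}$, and since $G_{\mathds{C}}$ is a (complex) Lie group the map $z\mapsto f_2(z)^{-1}f_1(z)$ is again holomorphic $\mathds{C}^{*}\to G_{\mathds{C}}$; call it $\tilde\gamma$. By construction $\tilde\gamma|_{\mathcal{S}^{1}} = \gamma$, and by the uniqueness theorem for holomorphic functions (already invoked in Lemma~\ref{lemcontembpre}) this holomorphic extension is unique. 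Thus $\tilde\gamma \in \mathds{C}^{*}G_{\mathds{C}}$ with $\tilde\gamma(\mathcal{S}^{1}) = \gamma(\mathcal{S}^{1})\subset K_{\mathds{R}}\subset G_{\mathds{R}}$, whence $\tilde\gamma \in \mathds{C}^{*}G_{\mathds{R}}$.

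The crux is then to upgrade membership in $\mathds{C}^{*}G_{\mathds{R}}$ to membership in $\mathds{C}^{*}K_{\mathds{R}}$. By the very definition of $\mathds{C}^{*}K_{\mathds{R}}$ as $\{f\in\mathds{C}^{*}K_{\mathds{C}} : f(\mathcal{S}^{1})\subset K_{\mathds{R}}\}$, and noting $K_{\mathds{C}}\subset G_{\mathds{C}}$, it suffices to check that $\tilde\gamma$ takes values in the complex compact form $K_{\mathds{C}}$ (not merely $G_{\mathds{C}}$). This again follows from holomorphic continuation: on $\mathcal{S}^{1}$ we have $\tilde\gamma = \gamma$ with values in $K_{\mathds{R}}\subset K_{\mathds{C}}$; since $K_{\mathds{C}}$ is a closed complex subgroup of $G_{\mathds{C}}$ cut out (locally, near points of $K_{\mathds{R}}$) by holomorphic equations, and $\tilde\gamma$ is holomorphic and satisfies those equations on the uniqueness set $\mathcal{S}^{1}$, it satisfies them on all of $\mathds{C}^{*}$. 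Hence $\tilde\gamma\in\mathds{C}^{*}K_{\mathds{R}}$, giving $f_1 = f_2\cdot\tilde\gamma$ with $\tilde\gamma\in\mathds{C}^{*}K_{\mathds{R}}$, so $f_1$ and $f_2$ represent the same point in $\mathds{C}^{*}G_{\mathds{R}}/\mathds{C}^{*}K_{\mathds{R}}$, as required.

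I expect the main obstacle to be the last step, namely justifying that the holomorphic extension $\tilde\gamma$ lands in $K_{\mathds{C}}$ rather than only $G_{\mathds{C}}$. This is where the connectedness hypothesis on $K_{\mathds{R}}$ (explicitly assumed in the text just before the diagram~(\ref{homeoloopp})) should be used, to guarantee that $K_{\mathds{C}}$ is a well-behaved complexification and that the defining holomorphic conditions for $K_{\mathds{C}}$ propagate by the identity theorem from the totally real submanifold $\mathcal{S}^{1}$; the group structure of $G_{\mathds{C}}$ and the fact that $\mathcal{S}^{1}$ is a set of uniqueness for holomorphy on $\mathds{C}^{*}$ do the rest. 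Everything else is the same holomorphic-rigidity argument already used for Lemma~\ref{lemcontembpre}, merely applied fibrewise along cosets.
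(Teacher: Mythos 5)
Your proof is correct and takes essentially the same route as the paper: there, one likewise forms the ratio $\alpha=\delta\cdot\gamma^{-1}$ of the two boundary restrictions, observes that it extends canonically (holomorphically) to all of $\mathds{C}^{*}$ because both loops come from $\mathds{C}^{*}G_{\mathds{R}}$, and concludes that the $\mathds{C}^{*}K_{\mathds{R}}$-cosets coincide. Your extra step checking that this extension actually takes values in $K_{\mathds{C}}$ (so that it lies in $\mathds{C}^{*}K_{\mathds{R}}$ and not merely $\mathds{C}^{*}G_{\mathds{R}}$) is a detail the paper's proof leaves implicit, and supplying it via the identity theorem is a sound refinement of the same argument.
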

\begin{proof}
	Let $\gamma$ and $\delta$ be two loops in $\mathds{C}^{*}G_{\mathds{R}}$
	such that $[\gamma]=[\delta]$ in $LG_{\mathds{R}}/LK_{\mathds{R}}$ via $H$ as in (\ref{eq1to1}).\ Hence there exists a loop $\alpha$ in $LK_{\mathds{R}}$ such that $\gamma\cdot\alpha=\delta$ on $\mathcal{S}^{1}$.\
	Since $\gamma$ and $\delta$ are both in $\mathds{C}^{*}G_{\mathds{R}}$, $\alpha=\delta\cdot\gamma^{-1}$ has a canonical extension $\tilde{\alpha}$ to $\mathds{C}^{*}$.\ This implies that $[\gamma]^{\mathds{C}^{*}}=[\delta]^{\mathds{C}^{*}}$ in
	$\mathds{C}^{*}G_{\mathds{R}}/\mathds{C}^{*}K_{\mathds{R}}$.\ Hence (\ref{eqliftablles01477}) defines a continuous injection.
\end{proof}
\begin{lemma}\label{freeloopfunctorgothru}
	With the above notations, we have the following continuous injection
	\begin{equation}\label{eqfreeloopfunctorphi00000}
	LG_{\mathds{R}}/LK_{\mathds{R}}\hookrightarrow L(G_{\mathds{R}}/K_{\mathds{R}}),
	\end{equation}
	with respect to the (compact-open) $C^{\infty}$-topology.
\end{lemma}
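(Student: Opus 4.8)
The plan is to exhibit the map in (\ref{eqfreeloopfunctorphi00000}) as the one induced by post-composition with the natural projection $p:G_{\mathds{R}}\to G_{\mathds{R}}/K_{\mathds{R}}$. Since $p$ is smooth, composing with it sends a smooth loop $\gamma:\mathcal{S}^{1}\to G_{\mathds{R}}$ to the smooth loop $p\circ\gamma:\mathcal{S}^{1}\to G_{\mathds{R}}/K_{\mathds{R}}$, defining a map $p_{*}:LG_{\mathds{R}}\to L(G_{\mathds{R}}/K_{\mathds{R}})$. First I would check that $p_{*}$ is constant on the right $LK_{\mathds{R}}$-orbits: for $\alpha\in LK_{\mathds{R}}$ we have $\alpha(t)\in K_{\mathds{R}}$ for every $t$, so $p(\gamma(t)\alpha(t))=p(\gamma(t))$ pointwise, whence $p_{*}(\gamma\cdot\alpha)=p_{*}(\gamma)$. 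Thus $p_{*}$ descends to a well-defined map $\overline{p_{*}}:LG_{\mathds{R}}/LK_{\mathds{R}}\to L(G_{\mathds{R}}/K_{\mathds{R}})$, which is our candidate for (\ref{eqfreeloopfunctorphi00000}).

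For continuity, I would invoke the fact that post-composition with a fixed smooth map is continuous for the (compact-open) $C^{\infty}$-topology. Because $\mathcal{S}^{1}$ is compact, the weak and strong Whitney topologies coincide, and the chain rule (F\`aa di Bruno) expresses every derivative of $p\circ\gamma$ as a continuous function of the derivatives of $\gamma$ on the compact set $\mathcal{S}^{1}$; hence $p_{*}$ is continuous. Since $LG_{\mathds{R}}\to LG_{\mathds{R}}/LK_{\mathds{R}}$ is a quotient map, the universal property of the quotient topology promotes the continuity of $p_{*}$ to continuity of $\overline{p_{*}}$.

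The crux is injectivity, which I would handle in the same spirit as Lemma~\ref{liftlopembed}. Suppose $\overline{p_{*}}([\gamma])=\overline{p_{*}}([\delta])$, that is, $p(\gamma(t))=p(\delta(t))$ for all $t\in\mathcal{S}^{1}$; equivalently $\gamma(t)K_{\mathds{R}}=\delta(t)K_{\mathds{R}}$, so that $\gamma(t)^{-1}\delta(t)\in K_{\mathds{R}}$ for every $t$. Define $\alpha(t):=\gamma(t)^{-1}\delta(t)$, so that $\gamma\cdot\alpha=\delta$ pointwise, and $\alpha$ is periodic since $\gamma$ and $\delta$ are. The map $t\mapsto\alpha(t)$ is smooth into $G_{\mathds{R}}$ because inversion and multiplication are smooth, and its image lies in $K_{\mathds{R}}$. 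Therefore $\alpha\in LK_{\mathds{R}}$ and $\delta=\gamma\cdot\alpha$, giving $[\gamma]=[\delta]$ in $LG_{\mathds{R}}/LK_{\mathds{R}}$.

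The main obstacle is precisely the last verification: one must ensure that the pointwise coset representative $\alpha=\gamma^{-1}\delta$ is genuinely an element of $LK_{\mathds{R}}$, i.e.\ that it is smooth \emph{as a map into} $K_{\mathds{R}}$ and not merely a smooth map into $G_{\mathds{R}}$ whose values happen to lie in $K_{\mathds{R}}$. This is where the hypotheses of this section enter: $K_{\mathds{R}}$ is a (connected) closed Lie subgroup of the compact Lie group $G_{\mathds{R}}$, hence an embedded submanifold, so any smooth map into $G_{\mathds{R}}$ with image contained in $K_{\mathds{R}}$ is automatically smooth into $K_{\mathds{R}}$. I would record this embeddedness explicitly, after which injectivity of $\overline{p_{*}}$ follows, completing the proof. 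Note that surjectivity is neither claimed nor needed, which is fortunate: a free loop in $G_{\mathds{R}}/K_{\mathds{R}}$ lifts only to a path in $G_{\mathds{R}}$ that in general closes up only modulo $K_{\mathds{R}}$.
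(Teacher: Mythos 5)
Your proposal is correct and follows essentially the same route as the paper: post-composition with the projection $G_{\mathds{R}}\to G_{\mathds{R}}/K_{\mathds{R}}$ induces the map, it descends to the coset space, and injectivity comes from observing that the pointwise ratio $\alpha(t)=\gamma(t)^{-1}\delta(t)$ defines a loop in $K_{\mathds{R}}$. Your version is in fact slightly more careful than the paper's (checking invariance under the full right $LK_{\mathds{R}}$-action rather than just constancy on $LK_{\mathds{R}}$, and justifying via embeddedness of the closed subgroup $K_{\mathds{R}}$ that $\alpha$ is smooth \emph{into} $K_{\mathds{R}}$), but these are refinements of the same argument, not a different one.
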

\begin{proof}
	Define
	\begin{equation}\label{eqfreeloopfunctorphi1}
	\begin{cases}
	\Phi:&LG_{\mathds{R}}\to L(G_{\mathds{R}}/K_{\mathds{R}})  \\
	&\begin{cases}
	\gamma:&\mathcal{S}^{1}\to G_{\mathds{R}}  \\
	&t\mapsto \gamma(t)
	\end{cases} \mapsto\begin{cases}
	\Phi(\gamma):&\mathcal{S}^{1}\to G_{\mathds{R}}/K_{\mathds{R}}\\
	&t\mapsto [\gamma(t)]
	\end{cases}.\\
	\end{cases}
	\end{equation}
	Since the mapping $\pi:G_{\mathds{R}}\to G_{\mathds{R}}/K_{\mathds{R}}$ is continuous ,
	$\Phi$ is continuous.\\
	Moreover, $\Phi$ is constant on $LK_{\mathds{R}}$ as it maps any loop in $LK_{\mathds{R}}$ to
	the constant loop $t\mapsto K_{\mathds{R}}$ in $L(G_{\mathds{R}}/K_{\mathds{R}})$.\ Therefore, we obtain
	the following continuous map induced by $\Phi$:
	\[
	\tilde{\Phi}:LG_{\mathds{R}}/LK_{\mathds{R}}\to L(G_{\mathds{R}}/K_{\mathds{R}}).
	\]
	Now, if $\Phi(\gamma)=\Phi(\delta)$ for a pair $\gamma,\delta\in LG_{\mathds{R}}$ then
	\[
	\forall t\in\mathcal{S}^{1}~:~~[\gamma(t)]=[\delta(t)]~\Longrightarrow~~\gamma(t)K_{\mathds{R}}=\delta(t)K_{\mathds{R}}
	\]
	hence for all $t\in\mathcal{S}^{1}$, $\delta(t)^{-1}\gamma(t)\in K_{\mathds{R}}$.\ Define
	\[
	\begin{cases}
	\alpha:&\mathcal{S}^{1}\to K_{\mathds{R}}  \\
	&t\mapsto\delta(t)^{-1}\gamma(t),
	\end{cases}
	\]
	then it follows that $[\gamma]^{L}=[\delta]^{L}$ via $\alpha$ in $LG_{\mathds{R}}/LK_{\mathds{R}}$.
\end{proof}
Now we are able to construct $LK_{\mathds{R}}$-bundle on primitive affine Kac-Moody symmetric spaces.

\begin{prop}\label{proLKbun}
	Let $\mathds{C}^{*}G_{\mathds{R}}/ \mathds{C}^{*}K_{\mathds{R}}$ be a primitive affine Kac-Moody symmetric space where $G_{\mathds{R}}$ is a linear connected semisimple compact Lie group and $K_{\mathds{R}}$ is connected.\ Then there exists a $LK_{\mathds{R}}$-bundle on $\mathds{C}^{*}G_{\mathds{R}}/ \mathds{C}^{*}K_{\mathds{R}}$.
\end{prop}
\begin{proof}
By Lemma~\ref{freeloopfunctorgothru} the following commutative diagram exists:
\begin{equation} \label{eqfreeloopfunctorphi2}
\xymatrix{
	&LG_{\mathds{R}} \ar@{>}[d]_{\Pi}\ar@{>}[r]^{\Phi}
	&L(G_{\mathds{R}}/K_{\mathds{R}}) \\
	&LG_{\mathds{R}}/LK_{\mathds{R}}\ar@{>}[ru]_{\tilde{\Phi}}}
\end{equation}
Now we consider the following bundle:
\begin{equation}\label{eqliftablebundles022}
LG_{\mathds{R}}\to LG_{\mathds{R}}/LK_{\mathds{R}},
\end{equation}
which is an $LK_{\mathds{R}}$-bundle considered as infinite dimensional manifolds with the structure groups as infinite dimensional Lie groups in the sense of \cite{zbMATH04002404}.\\
Also, by applying the loop ``functor" to the natural principal $K_{\mathds{R}}$-bundle
\begin{equation}\label{eqliftablles08711}
\pi:G_{\mathds{R}}\to G_{\mathds{R}}/K_{\mathds{R}},
\end{equation}
we obtain the following $LK_{\mathds{R}}$-bundle:
\begin{equation}\label{eqliftabllesL08711}
L\pi:LG_{\mathds{R}}\to L(G_{\mathds{R}}/K_{\mathds{R}}).
\end{equation}
We also have the following tame Fr\'{e}chet principal $\mathds{C}^{*}K_{\mathds{R}}$-bundle
\begin{equation}\label{eqliftablebundles02}
\mathds{C}^{*}G_{\mathds{R}}\to\mathds{C}^{*}G_{\mathds{R}}/\mathds{C}^{*}K_{\mathds{R}}.
\end{equation}
Combining these bundles with the embeddings in Lemma~\ref{liftlopembed} and Lemma~\ref{freeloopfunctorgothru}, we obtain the following commutative diagram:
\begin{equation} \label{eqoveallfigsofar}
\xymatrix{
	\mathds{C}^{*}K_{\mathds{R}}\ar@{^{(}->}[d] \ar@{^{(}->}[r]& LK_{\mathds{R}} \ar@{^{(}->}[d]&\\
	\mathds{C}^{*}G_{\mathds{R}}\ar@{->>}[d] \ar@{^{(}->}[r]& LG_{\mathds{R}}\ar@{->>}[d]^{\Pi}\ar@{->>}[dr]^{L\pi}&\\
	\mathds{C}^{*}G_{\mathds{R}}/\mathds{C}^{*}K_{\mathds{R}} \ar@{^{(}->}[r]^{i}& LG_{\mathds{R}}/LK_{\mathds{R}} \ar@{^{(}->}[r]^{\tilde{\Phi}}&L(G_{\mathds{R}}/K_{\mathds{R}})}
\end{equation}
Now by lifting the bundle $L\pi$ over $\mathds{C}^{*}G_{\mathds{R}}/\mathds{C}^{*}K_{\mathds{R}}$ via $\tilde{\Phi}\circ i$ in (\ref{eqoveallfigsofar}) we obtain the following principal $LK_{\mathds{R}}$-bundle:
\begin{equation}\label{eqlltloeryp0}
\xi:~LK_{\mathds{R}}\hookrightarrow (\tilde{\Phi}\circ i)^{*}(L\pi)\to\mathds{C}^{*}G_{\mathds{R}}/\mathds{C}^{*}K_{\mathds{R}}.
\end{equation}
\end{proof}
\begin{remark}\label{remcontsetting}
	Note that any structure on $\Pi$ and, particularly, $L\pi$ can be lifted to $\xi$.\ Moreover, if we replace smooth loops by continuous loops and denote the mapping space it generates by $\mathfrak{L}$, we obtain a diagram similar to (\ref{eqoveallfigsofar}) but instead, the bundle (\ref{eqliftablebundles022}) becomes a locally trivial Banach principal $\mathfrak{L}K_{\mathds{R}}$-bundle\ (see \cite{zbMATH04002404}) as a special case of Proposition~\ref{propprincbundle}, namely
	\begin{equation}\label{eq:remlcouns02}
	\mathfrak{L}G_{\mathds{R}}\to \mathfrak{L}G_{\mathds{R}}/\mathfrak{L}K_{\mathds{R}}.
	\end{equation}	
\end{remark}
Let $\mathfrak{D}$ denote a Dynkin diagram and let $G(\mathfrak{D})$ denote the split (algebraically) simply-connected real Kac-Moody group of type $\mathfrak{D}$ (see \cite{zbMATH04017219}).\ By
$K(\mathfrak{D})$ we understand the fixed point subgroup associated to the
Cartan-Chevalley involution $\sigma_{\mathfrak{D}}$ on $G(\mathfrak{D})$ with respect to $\mathfrak{D}$.\ We also call $K(\mathfrak{D})$ a maximal compact form of $G(\mathfrak{D})$.\ Moreover, if $\mathfrak{D}$ is an irreducible simply laced Dynkin diagram, then by \cite[Theorem A]{zbMATH06710741}, up to isomorphism, there exists a uniquely determined group $Spin(\mathfrak{D})$ with a canonical two-to-one central extension:
\begin{equation}\label{eqcentralK}
\mathds{Z}_{2}\to\text{Spin}(\mathfrak{D})\stackrel{\rho}{\longrightarrow}K(\mathfrak{D}).
\end{equation}
For a spherical Dynkin diagram $\mathfrak{D}$, let $L^{\circ}K(\mathfrak{D})$ denote the connected component of $LK(\mathfrak{D})$.\ When
$K(\mathfrak{D})$ is connected, it is known that the number of connected components of
$LK(\mathfrak{D})$ is equal to the cardinality of the first homotopy group
of $K(\mathfrak{D})$.\ Recall that in this setting when $\mathfrak{D}$ is spherical, we always assume $K(\mathfrak{D})$
to be \textit{connected}.\\
Before we state the next result, notice that for a spherical irreducible
Dynkin diagram $\mathfrak{D}$ of rank $n\geq3$, if $BLK(\mathfrak{D})$ denotes the classifying space of $LK(\mathfrak{D})$ then by the universal coefficient theorem and the Hurewicz theorem we have
\begin{equation}\label{eqHurewiczthm}
\text{H}^{1}(BLK(\mathfrak{D});A)\cong\text{Hom}(\pi_{1}(K(\mathfrak{D})),A),
\end{equation}
for any abelian group $A$.\\
Here we give a slightly generalized version of \cite[Proposition 2.1]{zbMATH04216267} .
\begin{prop}\label{propliftconcom}
	Let $M$ be a 1-connected orientable Riemannian manifold of finite dimension.\
	Let $P\to M$ be a principal $K(\mathfrak{D})$-bundle where $\mathfrak{D}$ is a spherical irreducible Dynkin diagram of rank $n\geq3$ and let $\tilde{K}(\mathfrak{D})$ denote the universal covering group of $K(\mathfrak{D})$.\ Moreover, assume that $\pi_{1}(K(\mathfrak{D}))\cong\mathds{Z}_{2}$.\ Then the following are equivalent
	\begin{description}
		\item[(1)] The structure group of $LP\to LM$ is reducible to $L^{\circ}K(\mathfrak{D})$.
		\item[(2)] The structure group of $P\to M$ can be lifted to $\tilde{K}(\mathfrak{D})$.
		\item[(3)] The structure group of $LP\to LM$ can be lifted to $L\tilde{K}(\mathfrak{D})$.
	\end{description}
\end{prop}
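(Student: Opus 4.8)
The plan is to prove the three statements equivalent by establishing the cycle $(2)\Rightarrow(3)\Rightarrow(1)\Rightarrow(2)$, where the first two implications are formal and all the content sits in $(1)\Rightarrow(2)$. First I would record the relevant topology of the loop groups. Since $\mathfrak{D}$ is spherical, $K(\mathfrak{D})$ is a connected compact (finite-dimensional) Lie group, and for such groups $\pi_{0}(LK(\mathfrak{D}))\cong\pi_{1}(K(\mathfrak{D}))\cong\mathds{Z}_{2}$; hence $L^{\circ}K(\mathfrak{D})$ is a normal subgroup of index two with $LK(\mathfrak{D})/L^{\circ}K(\mathfrak{D})\cong\mathds{Z}_{2}$. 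Looping the double cover $\rho:\tilde{K}(\mathfrak{D})\to K(\mathfrak{D})$ (which is the universal cover, as $\pi_{1}(K(\mathfrak{D}))\cong\mathds{Z}_{2}$) produces $L\rho:L\tilde{K}(\mathfrak{D})\to LK(\mathfrak{D})$ whose kernel is the group of constant $\mathds{Z}_{2}$-valued loops, namely $\mathds{Z}_{2}$, and whose image is exactly $L^{\circ}K(\mathfrak{D})$: a free loop in $K(\mathfrak{D})$ lifts to a free loop in the simply connected $\tilde{K}(\mathfrak{D})$ precisely when its class in $\pi_{1}(K(\mathfrak{D}))=\pi_{0}(LK(\mathfrak{D}))$ vanishes. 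Thus $\mathds{Z}_{2}\to L\tilde{K}(\mathfrak{D})\xrightarrow{L\rho}L^{\circ}K(\mathfrak{D})$ is a central $\mathds{Z}_{2}$-extension (a double cover). Because $\mathfrak{D}$ is spherical, all the bundles in sight are of the (Fr\'echet) type governed by Theorem~\ref{princbundle}, so the notions of reduction and lifting of structure groups are meaningful.

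Next I would dispatch the two easy implications. For $(2)\Rightarrow(3)$: a lift is a principal $\tilde{K}(\mathfrak{D})$-bundle $\tilde{P}\to M$ with $\tilde{P}\times_{\rho}K(\mathfrak{D})\cong P$, and applying the loop functor yields the $L\tilde{K}(\mathfrak{D})$-bundle $L\tilde{P}\to LM$ with $L\tilde{P}\times_{L\rho}LK(\mathfrak{D})\cong LP$, which is exactly a lift of $LP$ across $L\rho$. For $(3)\Rightarrow(1)$: given a lift $Q\to LM$ to $L\tilde{K}(\mathfrak{D})$, corestrict $L\rho$ to its image $L^{\circ}K(\mathfrak{D})$ and form $Q\times_{L\rho}L^{\circ}K(\mathfrak{D})$; this is an $L^{\circ}K(\mathfrak{D})$-reduction of $LP$. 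Both steps are mere functoriality of the loop construction and of the associated-bundle construction, so the whole equivalence reduces to closing the cycle with $(1)\Rightarrow(2)$.

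For the remaining implication I would translate $(1)$ and $(2)$ into vanishing of obstruction classes and link them by transgression. Let $\beta\in H^{2}(BK(\mathfrak{D});\mathds{Z}_{2})$ be the class of the central extension $\mathds{Z}_{2}\to\tilde{K}(\mathfrak{D})\to K(\mathfrak{D})$, i.e. the primary obstruction to lifting a $K(\mathfrak{D})$-structure across $\rho$, and let $f:M\to BK(\mathfrak{D})$ classify $P$; then $(2)$ holds iff $f^{*}\beta=0$ in $H^{2}(M;\mathds{Z}_{2})$. On the loop side, $(1)$ holds iff the $\mathds{Z}_{2}$-cover $LP/L^{\circ}K(\mathfrak{D})\to LM$ associated to $\pi_{0}(LK(\mathfrak{D}))\cong\mathds{Z}_{2}$ is trivial, that is, iff its classifying class $c\in H^{1}(LM;\mathds{Z}_{2})$ vanishes. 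The crux is the universal identity $c=\tau(f^{*}\beta)$, where $\tau:H^{2}(M;\mathds{Z}_{2})\to H^{1}(LM;\mathds{Z}_{2})$ is the transgression attached to the evaluation map $\mathrm{ev}:\mathcal{S}^{1}\times LM\to M$: the monodromy of the loop-level double cover is the $\mathcal{S}^{1}$-integration of the obstruction $\beta$. Restricting $c$ to based loops $\Omega M\subset LM$ produces $\tau_{0}(f^{*}\beta)\in H^{1}(\Omega M;\mathds{Z}_{2})$, with $\tau_{0}$ the cohomology suspension of the path-loop fibration $\Omega M\to PM\to M$.

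The final input is that $M$ is $1$-connected. Then $\Omega M$ is connected with $\pi_{1}(\Omega M)\cong\pi_{2}(M)$, and Hurewicz together with the universal coefficient theorem identify both $H^{2}(M;\mathds{Z}_{2})$ and $H^{1}(\Omega M;\mathds{Z}_{2})$ with $\mathrm{Hom}(\pi_{2}(M),\mathds{Z}_{2})$, the suspension $\tau_{0}$ realizing this isomorphism; in particular $\tau_{0}$ is injective. Consequently, if $(1)$ holds then $c=0$, so $\tau_{0}(f^{*}\beta)=c|_{\Omega M}=0$, whence $f^{*}\beta=0$ and $(2)$ follows, closing the cycle. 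I expect the main obstacle to be the universal transgression identity $c=\tau(f^{*}\beta)$ together with the correct identification of $\tau$ through the evaluation map and the injectivity of $\tau_{0}$; this is precisely where the hypotheses that $M$ is $1$-connected and that $\pi_{1}(K(\mathfrak{D}))\cong\mathds{Z}_{2}$ (ensured by $\mathfrak{D}$ being spherical irreducible of rank $n\geq 3$) are used, everything else being formal.
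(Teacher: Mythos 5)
Your proposal is correct and is essentially the paper's own approach: the paper proves Proposition~\ref{propliftconcom} by invoking the proof of \cite[Proposition 2.1]{zbMATH04216267} with $\text{SO}(n)$ replaced by $K(\mathfrak{D})$, the coefficient group replaced by $\pi_{1}(K(\mathfrak{D}))$ in (\ref{eqHurewiczthm}), and the covering $\pi_{1}(K(\mathfrak{D}))\to\tilde{K}(\mathfrak{D})\to K(\mathfrak{D})$ in place of $\mathds{Z}_{2}\to\text{Spin}(n)\to\text{SO}(n)$. Your argument --- the formal implications $(2)\Rightarrow(3)\Rightarrow(1)$ by looping, and $(1)\Rightarrow(2)$ by identifying the reduction obstruction in $\text{H}^{1}(LM;\mathds{Z}_{2})$ with the transgression of the lifting obstruction in $\text{H}^{2}(M;\mathds{Z}_{2})$ and using $1$-connectedness of $M$ together with the Hurewicz/universal-coefficient identification to get injectivity --- is exactly that adapted proof written out in full.
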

\begin{proof}
	It follows from the proof of \cite[Proposition 2.1]{zbMATH04216267} by replacing $\text{SO}(n)$ by $K(\mathfrak{D})$, replacing $A$ by $\pi_{1}(K(\mathfrak{D}))$ in (\ref{eqHurewiczthm}) and using the following covering map:
	\begin{equation}\label{eqcentralpropliftconco}
	\pi_{1}(K(\mathfrak{D}))\to \tilde{K}(\mathfrak{D})\to K(\mathfrak{D}).
	\end{equation}
\end{proof}
\begin{cor}\label{corliftstruc}
	Let $\mathfrak{D}$ be a spherical irreducible Dynkin diagram of rank $n\geq3$.\ Assume that $G(\mathfrak{D})$ is compact and 1-connected.\ Assume that $K(\mathfrak{D})$ is connected and $\pi_{1}(K(\mathfrak{D}))\cong\mathds{Z}_{2}$.\ Then the primitive affine Kac-Moody symmetric space
	\begin{equation}\label{eqcorliftstruc00}
	\mathds{C}^{*}G(\mathfrak{D})/\mathds{C}^{*}K(\mathfrak{D}),
	\end{equation}
	admits an $L\tilde{K}(\mathfrak{D})$ structure if the structure group of the bundle
	\begin{equation}\label{eqcorliftstruc01}
	G(\mathfrak{D})\to G(\mathfrak{D})/K(\mathfrak{D}),
	\end{equation}
	can be lifted to $\tilde{K}(\mathfrak{D})$, or equivalently, if the structure group of
	\begin{equation}\label{eqcorliftstruc02}
	LG(\mathfrak{D})\to L(G(\mathfrak{D})/K(\mathfrak{D})),
	\end{equation}
	can be reduced to $L^{\circ}K(\mathfrak{D})$.
\end{cor}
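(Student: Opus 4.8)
The plan is to deduce the corollary from Proposition~\ref{propliftconcom} together with the pull-back construction of the bundle $\xi$ in~\eqref{eqlltloeryp0}. First I would check that the finite-dimensional symmetric space $M:=G(\mathfrak{D})/K(\mathfrak{D})$ meets the hypotheses of Proposition~\ref{propliftconcom}. Since $G(\mathfrak{D})$ is compact and $1$-connected and $K(\mathfrak{D})$ is connected, the long exact homotopy sequence of the fibration $K(\mathfrak{D})\to G(\mathfrak{D})\to M$ gives $\pi_{1}(M)=0$ from $\pi_{1}(G(\mathfrak{D}))=0$ and $\pi_{0}(K(\mathfrak{D}))=0$; in particular $M$ is $1$-connected, hence orientable, and it is a finite-dimensional compact Riemannian symmetric space. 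As $\pi_{1}(K(\mathfrak{D}))\cong\mathds{Z}_{2}$ is assumed and $n\geq 3$, all hypotheses of Proposition~\ref{propliftconcom} hold for the principal $K(\mathfrak{D})$-bundle $P:=G(\mathfrak{D})\to M$.

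Next I would record the equivalence of the two conditions appearing in the statement. Applying Proposition~\ref{propliftconcom} to $P\to M$, the reducibility of the structure group of $LG(\mathfrak{D})\to L(G(\mathfrak{D})/K(\mathfrak{D}))$ to $L^{\circ}K(\mathfrak{D})$ (condition~(1)), the liftability of the structure group of $G(\mathfrak{D})\to G(\mathfrak{D})/K(\mathfrak{D})$ to $\tilde{K}(\mathfrak{D})$ (condition~(2)), and the liftability of the structure group of $LG(\mathfrak{D})\to L(G(\mathfrak{D})/K(\mathfrak{D}))$ to $L\tilde{K}(\mathfrak{D})$ (condition~(3)) are all equivalent. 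The equivalence of the two hypotheses named in the corollary is exactly (2)$\Leftrightarrow$(1).

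Finally I would transport the lift to the primitive symmetric space. Assuming condition~(2) (equivalently~(1)), condition~(3) furnishes an $L\tilde{K}(\mathfrak{D})$-bundle lifting the structure group of $L\pi\colon LG(\mathfrak{D})\to L(G(\mathfrak{D})/K(\mathfrak{D}))$ along the loop covering $L\tilde{K}(\mathfrak{D})\to L^{\circ}K(\mathfrak{D})$. By construction (see diagram~\eqref{eqoveallfigsofar} and~\eqref{eqlltloeryp0}), the $LK(\mathfrak{D})$-bundle over $\mathds{C}^{*}G(\mathfrak{D})/\mathds{C}^{*}K(\mathfrak{D})$ is $\xi=(\tilde{\Phi}\circ i)^{*}(L\pi)$, the pull-back of $L\pi$ along the continuous injection $\tilde{\Phi}\circ i$ produced in Lemmas~\ref{liftlopembed} and~\ref{freeloopfunctorgothru}. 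Since a reduction or lift of the structure group of a principal bundle is natural under pull-back --- if $Q$ is an $L\tilde{K}(\mathfrak{D})$-lift of $L\pi$ then $(\tilde{\Phi}\circ i)^{*}Q$ is an $L\tilde{K}(\mathfrak{D})$-lift of $\xi$ --- the pulled-back bundle $\xi$ acquires an $L\tilde{K}(\mathfrak{D})$ structure, which is the assertion.

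The main obstacle I expect lies not in the assembly above but in two checks hidden in it. First, one must verify carefully that $M=G(\mathfrak{D})/K(\mathfrak{D})$ is genuinely $1$-connected and orientable so that Proposition~\ref{propliftconcom} applies; this is precisely where the hypotheses ``$G(\mathfrak{D})$ is $1$-connected'' and ``$K(\mathfrak{D})$ is connected'' are consumed. Second, one must ensure that the naturality of structure-group lifting survives the pull-back along $\tilde{\Phi}\circ i$ into the infinite-dimensional, merely tame Fr\'echet (indeed only continuous) category in which $\xi$ lives, so that the lift along $L\tilde{K}(\mathfrak{D})\to L^{\circ}K(\mathfrak{D})$ is preserved and genuinely defines the claimed $L\tilde{K}(\mathfrak{D})$ structure on $\mathds{C}^{*}G(\mathfrak{D})/\mathds{C}^{*}K(\mathfrak{D})$.
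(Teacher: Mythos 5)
Your proposal is correct and follows essentially the same route as the paper: the long exact homotopy sequence of $K(\mathfrak{D})\to G(\mathfrak{D})\to G(\mathfrak{D})/K(\mathfrak{D})$ to get $1$-connectedness, then Proposition~\ref{propliftconcom} for the equivalences, then pulling the $L\tilde{K}(\mathfrak{D})$ structure on \eqref{eqcorliftstruc02} back to \eqref{eqcorliftstruc00} via the diagram \eqref{eqoveallfigsofar}. Your version merely spells out the hypothesis-checking and the naturality of lifts under pull-back, which the paper leaves implicit.
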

\begin{proof}
	Since $G(\mathfrak{D})$ is 1-connected and $K(\mathfrak{D})$ is connected, by the long exact sequence of homotopy groups induced by (\ref{eqcorliftstruc01}) we conclude that $G(\mathfrak{D})/K(\mathfrak{D})$ is 1-connected.\ The corollary follows from Proposition~\ref{propliftconcom} and pulling back the $L\tilde{K}(\mathfrak{D})$ structure over (\ref{eqcorliftstruc02})
	on (\ref{eqcorliftstruc00}) via the commutative diagram (\ref{eqoveallfigsofar}) in Proposition~\ref{proLKbun}.
\end{proof}
\begin{definition}\label{deflspinstructure}
	For a spherical simply laced irreducible Dynkin diagram $\mathfrak{D}$, a lift of
	the structure group of a principal $LK(\mathfrak{D})$-bundle to $L\tilde{K}(\mathfrak{D})=L\text{Spin}(\mathfrak{D})$ (if exists) 
	is called a \textit{loop-spin structure}.
\end{definition}
The next results states that every loop-spin structure on a primitive affine Kac-Moody symmetric space can be pulled back of the corresponding affine Kac-Moody space in a natural manner.

\begin{lemma}\label{lemextendtoafkmsym}
	In the above setting, if the primitive affine Kac-Moody symmetric  space $\mathds{C}^{*}G(\mathfrak{D})/\mathds{C}^{*}K(\mathfrak{D})$
	admits a loop-spin structure, then the affine symmetric Kac-Moody space 
	$\widehat{\mathds{C}^{*}G(\mathfrak{D})}/\widehat{\mathds{C}^{*}K(\mathfrak{D})}$ 
	admits a loop-spin structure as well.
\end{lemma}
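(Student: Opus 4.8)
The plan is to exhibit the full affine Kac-Moody symmetric space as the base of a loop bundle that is the \emph{pullback}, along a natural projection onto the primitive space, of the loop bundle carrying the given loop-spin structure; the loop-spin structure then transports by naturality of the pullback.

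First I would record the projection. By the construction of geometric affine Kac-Moody groups in \cite{WFrei09}, the group $\widehat{\mathds{C}^{*}G(\mathfrak{D})}$ is a torus fibre bundle over $\mathds{C}^{*}G(\mathfrak{D})$ with some projection $\pi_{G}$, and the ``compact'' form $\widehat{\mathds{C}^{*}K(\mathfrak{D})}$ is the corresponding sub-extension of $\mathds{C}^{*}K(\mathfrak{D})$, so that the restriction of $\pi_{G}$ to $\widehat{\mathds{C}^{*}K(\mathfrak{D})}$ has image $\mathds{C}^{*}K(\mathfrak{D})$. This compatibility of the two torus extensions is exactly what makes
\[
p:\widehat{\mathds{C}^{*}G(\mathfrak{D})}/\widehat{\mathds{C}^{*}K(\mathfrak{D})}\to\mathds{C}^{*}G(\mathfrak{D})/\mathds{C}^{*}K(\mathfrak{D}),\qquad \widehat{g}\,\widehat{\mathds{C}^{*}K(\mathfrak{D})}\mapsto \pi_{G}(\widehat{g})\,\mathds{C}^{*}K(\mathfrak{D}),
\]
well defined: replacing $\widehat{g}$ by $\widehat{g}\,\widehat{k}$ with $\widehat{k}\in\widehat{\mathds{C}^{*}K(\mathfrak{D})}$ moves $\pi_{G}(\widehat{g})$ only within its $\mathds{C}^{*}K(\mathfrak{D})$-coset. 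I would then note that $p$ is continuous (indeed smooth and tame) from the smoothness of $\pi_{G}$ and the quotient structures; continuity alone already suffices for the pullback construction below.

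Next I would identify the loop bundle over the full space with a pullback along $p$. Composing $\pi_{G}$ with the restriction map $H$ of (\ref{eq1to1}) sends $\widehat{\mathds{C}^{*}G(\mathfrak{D})}$ into $LG(\mathfrak{D})$, and since this discards the torus data it descends to the factorization
\[
\widehat{\mathds{C}^{*}G(\mathfrak{D})}/\widehat{\mathds{C}^{*}K(\mathfrak{D})}\xrightarrow{\;p\;}\mathds{C}^{*}G(\mathfrak{D})/\mathds{C}^{*}K(\mathfrak{D})\xrightarrow{\;i\;}LG(\mathfrak{D})/LK(\mathfrak{D}),
\]
where $i$ is the injection of Lemma~\ref{liftlopembed}. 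Hence the canonical map of the full space into $L(G(\mathfrak{D})/K(\mathfrak{D}))$ equals $(\tilde{\Phi}\circ i)\circ p$, with $\tilde{\Phi}$ as in Lemma~\ref{freeloopfunctorgothru}. Carrying out the construction that produced $\xi$ in (\ref{eqlltloeryp0}), but now over the full space, therefore yields the $LK(\mathfrak{D})$-bundle
\[
\widehat{\xi}:=\big((\tilde{\Phi}\circ i)\circ p\big)^{*}(L\pi)=p^{*}\big((\tilde{\Phi}\circ i)^{*}(L\pi)\big)=p^{*}\xi,
\]
by functoriality of the pullback. In words, the loop bundle attached to the full symmetric space is the $p$-pullback of the one attached to the primitive space.

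Finally I would transport the structure. By Definition~\ref{deflspinstructure}, a loop-spin structure on $\mathds{C}^{*}G(\mathfrak{D})/\mathds{C}^{*}K(\mathfrak{D})$ is a lift of the structure group of $\xi$ to $L\tilde{K}(\mathfrak{D})=L\mathrm{Spin}(\mathfrak{D})$, i.e.\ an $L\mathrm{Spin}(\mathfrak{D})$-bundle $\widetilde{Q}$ with $\widetilde{Q}\times_{L\mathrm{Spin}(\mathfrak{D})}LK(\mathfrak{D})\cong\xi$. Since pullback commutes with the associated-bundle construction, $p^{*}\widetilde{Q}$ is an $L\mathrm{Spin}(\mathfrak{D})$-bundle with $p^{*}\widetilde{Q}\times_{L\mathrm{Spin}(\mathfrak{D})}LK(\mathfrak{D})\cong p^{*}\xi=\widehat{\xi}$, which is precisely a loop-spin structure on $\widehat{\mathds{C}^{*}G(\mathfrak{D})}/\widehat{\mathds{C}^{*}K(\mathfrak{D})}$. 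The main obstacle, and the step deserving real care, is the identification in the previous paragraph: one must verify that the loop bundle \emph{naturally} associated to the full space is $\widehat{\xi}=p^{*}\xi$ and not a genuinely larger object, that is, that the torus directions distinguishing $\widehat{\mathds{C}^{*}G(\mathfrak{D})}$ from $\mathds{C}^{*}G(\mathfrak{D})$ contribute nothing to the loop structure group $LK(\mathfrak{D})$. This holds because passing to $LG(\mathfrak{D})$ via $H\circ\pi_{G}$ forgets exactly those torus directions, so the entire loop-theoretic data factors through $p$, and everything else is formal.
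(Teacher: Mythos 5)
Your proof is correct and takes essentially the same route as the paper: both exhibit the full affine Kac-Moody symmetric space as fibered over the primitive one (the paper via the $(\mathds{R}^{+})^{2}$-bundle structure adapted from \cite[Theorem 4.56]{WFreyn11}, you via the group-level torus-bundle projection descending to cosets) and then pull the given loop-spin structure back along that surjection. Your explicit identification of the loop bundle over the full space as $p^{*}\xi$ merely spells out what the paper's phrase ``the pull-back of the loop spin structure \ldots induces a loop spin structure'' leaves implicit.
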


\begin{proof}
	Similar to the proof of \cite[Theorem 3.4.1]{WFrei09} adapted for affine Kac-Moody
	symmetric spaces of type I, $\widehat{\mathds{C}^{*}G(\mathfrak{D})}/\widehat{\mathds{C}^{*}K(\mathfrak{D})}$
	can be considered as an ($(\mathds{R}^{+})^{2}$ )-bundle over $\mathds{C}^{*}G(\mathfrak{D})/\mathds{C}^{*}K(\mathfrak{D})$
	(which, unlike the case in \cite[Theorem 3.4.1]{WFrei09}, does not need to be trivial).\ Hence
	there exists a surjection 
	\begin{equation}\label{eq:lemextendtoafkmsym00}
	P:\widehat{\mathds{C}^{*}G(\mathfrak{D})}/\widehat{\mathds{C}^{*}K(\mathfrak{D})}\to \mathds{C}^{*}G(\mathfrak{D})/\mathds{C}^{*}K(\mathfrak{D}). 
	\end{equation}
	The pull-back of the loop-spin structure on $\mathds{C}^{*}G(\mathfrak{D})/\mathds{C}^{*}K(\mathfrak{D})$ on $\widehat{\mathds{C}^{*}G(\mathfrak{D})}/\widehat{\mathds{C}^{*}K(\mathfrak{D})}$ by $P$
	induces a loop spin structure for $\widehat{\mathds{C}^{*}G(\mathfrak{D})}/\widehat{\mathds{C}^{*}K(\mathfrak{D})}$.
\end{proof}

\begin{definition}\label{deflstringstructure}
	For a spherical simply laced irreducible Dynkin diagram $\mathfrak{D}$, a lift of
	the loop-spin structure on a principal bundle to the universal extension $\widehat{L\text{Spin}(\mathfrak{D})}:=L\text{Spin}(\mathfrak{D})\rtimes\mathcal{S}^{1}$ 
	is called a \textit{loop-string structure}.\
\end{definition}

The next result provides a condition under which an affine Kac-Moody symmetric space admits a loop-string structure.
\begin{prop}\label{proplstringafkmsym}
	Let $\mathfrak{D}$ be a spherical simply laced irreducible Dynkin diagram such that $G(\mathfrak{D})$ is compact and 1-connected, and
	$K(\mathfrak{D})\cong\text{SO}(n)$ where $n\geq5$.\ Assume that the canonical principal $K(\mathfrak{D})$-bundle
	\[
	\xi:~~K(\mathfrak{D})\hookrightarrow G(\mathfrak{D})\to G(\mathfrak{D})/K(\mathfrak{D}),
	\]
	has a spin structure.\ If the first Pontrjagin class of $\xi$ is zero (i.e., $p_{1}(\xi)=0$), then
	$\widehat{\mathds{C}^{*}G(\mathfrak{D})}/\widehat{\mathds{C}^{*}K(\mathfrak{D})}$ admits a loop-string
	structure.	
\end{prop}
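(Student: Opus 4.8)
The plan is to build the loop-string structure first over the free loop space $L(G(\mathfrak{D})/K(\mathfrak{D}))$ and then transport it through the chain of maps in diagram~(\ref{eqoveallfigsofar}) and along the projection of Lemma~\ref{lemextendtoafkmsym}. Write $M:=G(\mathfrak{D})/K(\mathfrak{D})$. Since $G(\mathfrak{D})$ is $1$-connected and $K(\mathfrak{D})$ is connected, the long exact homotopy sequence of $\xi$ shows that $M$ is a $1$-connected compact manifold, and as $\pi_{1}(\text{SO}(n))\cong\mathds{Z}_{2}$ for $n\geq5$ the hypotheses of Corollary~\ref{corliftstruc} hold. The assumed spin structure on $\xi$ is exactly a lift of its structure group to $\widetilde{K}(\mathfrak{D})=\text{Spin}(n)$, so by Corollary~\ref{corliftstruc} (applying the loop functor to this lift) the looped bundle $L\pi:LG(\mathfrak{D})\to L(G(\mathfrak{D})/K(\mathfrak{D}))$ of~(\ref{eqcorliftstruc02}) carries an $L\text{Spin}(n)$-structure, i.e. a loop-spin structure over $LM$.

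It then remains to lift this loop-spin structure through the $S^{1}$-extension $\widehat{L\text{Spin}(n)}$ of Definition~\ref{deflstringstructure}. I would treat $\widehat{L\text{Spin}(n)}$ as the basic central $S^{1}$-extension of $L\text{Spin}(n)$; such an extension is classified by a degree-$3$ integral class on the base, so the obstruction to the lift lives in $H^{3}(LM;\mathds{Z})$. By the standard transgression computation for loop-group lifts, using that for $n\geq5$ the group $H^{4}(B\text{Spin}(n);\mathds{Z})\cong\mathds{Z}$ is generated by the string class $\tfrac{1}{2}p_{1}$, this obstruction is the fibre integration $\tau\bigl(\tfrac{1}{2}p_{1}(\xi)\bigr)$ of the characteristic class of $\xi$ under $\tau:H^{4}(M;\mathds{Z})\to H^{3}(LM;\mathds{Z})$. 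The hypothesis $p_{1}(\xi)=0$ makes this obstruction vanish, so a loop-string structure exists over $LM$.

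Finally I would transport the structure. Pulling the loop-string structure over $L(G/K)$ back along the continuous injection $\widetilde{\Phi}\circ i:\mathds{C}^{*}G(\mathfrak{D})/\mathds{C}^{*}K(\mathfrak{D})\hookrightarrow L(G(\mathfrak{D})/K(\mathfrak{D}))$ of diagram~(\ref{eqoveallfigsofar}) produces a loop-string structure on the primitive affine Kac-Moody symmetric space, exactly as loop-spin structures are transported in the proof of Corollary~\ref{corliftstruc}. A further pull-back along the $(\mathds{R}^{+})^{2}$-bundle projection $P$ of~(\ref{eq:lemextendtoafkmsym00}), repeating verbatim the argument of Lemma~\ref{lemextendtoafkmsym} with $\widehat{L\text{Spin}(\mathfrak{D})}$ in place of $L\text{Spin}(\mathfrak{D})$, then carries the loop-string structure up to $\widehat{\mathds{C}^{*}G(\mathfrak{D})}/\widehat{\mathds{C}^{*}K(\mathfrak{D})}$, as required.

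The main obstacle is the obstruction analysis of the middle step, and it has two delicate points to record carefully. First, one must confirm that the extension $\widehat{L\text{Spin}(n)}=L\text{Spin}(n)\rtimes S^{1}$ of Definition~\ref{deflstringstructure} is indeed the one whose classifying class transgresses from $\tfrac{1}{2}p_{1}$. Second, one must check that $p_{1}(\xi)=0$ genuinely annihilates $\tau(\tfrac{1}{2}p_{1}(\xi))$ rather than merely its double, since a priori $p_{1}=0$ only yields $2\cdot\tfrac{1}{2}p_{1}=0$; this is a potential $2$-torsion issue in $H^{4}(M;\mathds{Z})$ (equivalently in $H^{3}(LM;\mathds{Z})$), which for the $1$-connected symmetric spaces at hand I would dispatch by recording the torsion behaviour of $H^{\leq4}(M;\mathds{Z})$. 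The remaining transport steps are routine given diagram~(\ref{eqoveallfigsofar}) and Lemma~\ref{lemextendtoafkmsym}.
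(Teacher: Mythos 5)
Your proposal is correct and takes essentially the same route as the paper: the paper passes to the associated spin bundle $\eta:\text{Spin}(n)\hookrightarrow Q\to G(\mathfrak{D})/K(\mathfrak{D})$, invokes \cite[Theorem 3.1]{zbMATH04216267} to obtain a loop-string structure on $L\eta$ from $p_{1}(\xi)=0$, and then pulls it back first along $\tilde{\Phi}\circ i$ and then along $P$, exactly as you do. The only difference is that the transgression/obstruction computation you carry out by hand is precisely the content of that cited theorem, and your flagged $p_{1}$ versus $\tfrac{1}{2}p_{1}$ two-torsion caveat is a genuine point --- but it applies equally to the paper's own proof, which silently passes from $p_{1}(\xi)=0$ to the hypothesis of the cited result.
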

\begin{proof}
	Let 
	\[
	\eta:\text{Spin}(n)\hookrightarrow Q\to G(\mathfrak{D})/K(\mathfrak{D}),
	\]
	be the spin bundle associated to $\xi$.\ Since $p_{1}(\xi)=0$ by \cite[Theorem 3.1]{zbMATH04216267}
	\[
	L\eta:LQ\to L(G(\mathfrak{D})/K(\mathfrak{D})),
	\]
	admits a loop-string structure.\ Now by similar procedures as in the proof of Corollary~\ref{corliftstruc}
	and Lemma~\ref{lemextendtoafkmsym}, we can pull back the loop-string structure of $L(G(\mathfrak{D})/K(\mathfrak{D}))$ first on $\mathds{C}^{*}G(\mathfrak{D})/\mathds{C}^{*}K(\mathfrak{D})$
	and then on the corresponding affine Kac-Moody symetric space, namely,  $\widehat{\mathds{C}^{*}G(\mathfrak{D})}/\widehat{\mathds{C}^{*}K(\mathfrak{D})}$.
\end{proof}

\begin{example}[A class of loop-spin structures of type $E_{9}I$]\label{affineexample00}
	For $\mathfrak{D}=E_{8}$, we have $G(E_{8})$ is compact and 1-connected, also, for
	$K(E_{8})\cong (\text{SU}(2)\times E_{7})/\mathbb{Z}_{2}\cong \text{SO}(16)$. Hence by Corollary~\ref{corliftstruc}, any spin structure on $G(E_{8})/K(E_{8})$ can be pulled back on an affine Kac-Moody symmetric space of type $E_{9}I$, namely
	\begin{equation}\label{eqorliftstruc00}
	LSpin(\xi):  \widehat{L\text{Spin}(\mathfrak{D})}\hookrightarrow L\mathcal{S}\to \widehat{\mathds{C}^{*}G(E_{8})}/\widehat{\mathds{C}^{*}K(E_{8})}.
	\end{equation}
	According to the well-known list of finite dimensional symmetric spaces (see Table V of \cite[\S X.6.2]{zbMATH03705455}), a string structure of the symmetric space of type (E VIII) also provides a loop-string 
	structure on the above affine Kac-Moody symmetric space.\ The same stands for the affine Kac-Moody symmetric space obtained from the symmetric space of type (A I) for ranks greater than 4.
\end{example}

\section{Spin-c Representations of Type \texorpdfstring{$E_{n}$}{TEXT}}\label{secAKMSSCT}
A finite dimensional representation of $K(E_{10})$ in $\text{SO}(32)$ is presented by means of the Berman's generators and relations (see \cite{zbMATH04134274}) and investigated in \cite{zbMATH06088567,DaKlNi006,BuyHePa06,zbMATH02212653}.\ By the natural embedding of $G(E_{n})$ into $G(E_{10})$ for $6\leq n\leq9$, one can obtain a representation of $K(E_{n})$ in $\text{SO}(32)$ for $6\leq n\leq9$.\ This is specially interesting for us in the affine case $K(E_{9}).$\ Using this finite dimensional representation, we produce some methods to construct certain finite spin as well as $\text{spin}^{c}$ structures on affine Kac-Moody symmetric spaces of type $E_{n}.$\ This leads to providing sufficient conditions for lifting the $\text{SO}(32)$ representation of certain $K$ to a $\text{Spin}^{c}(32)$ representation. 

Let $M:=G/K$ be an affine Kac-Moody symmetric space of compact type I constructed in \cite[\S 4.3.3]{WFrei09} where $G:=\widehat{\mathds{C}^{*}G^{\circ}_{\mathds{R}}}$, $G^{\circ}_{\mathds{R}}$ are of type $E_{n}$ (for $6\leq n\leq 8$) and $K:=\widehat{\mathds{C}^{*}K^{\circ}_{\mathds{R}}}$ are as in Section~\ref{secELB} obtained from a Cartan-Chevalley involution.\  We saw in
Section~\ref{secFacSpPB} that 
\begin{equation}\label{eq:pi000}
\pi:G\to M,
\end{equation}
is a locally trivial tame Fr\'echet principal $K$-bundle.\ This local triviality obtained in Theorem~\ref{maintheorel}, enables us to work with locally trivial associated bundles.\ Using the {\v C}ech cohomology corresponding to locally trivial principal bundle, in this section we present conditions under which the above bundle admits finite dimensional spin bundles by the finite dimensional representation of $K(E_{10})$ .\ 

First we start by showing that when $K^{\circ}_{\mathds{R}}$ is semisimple, then $K:=\widehat{\mathds{C}^{*}K^{\circ}_{\mathds{R}}}$ has a finite dimensional representation induced from the $\text{SO}(32)$ representation of $K(E_{10}).$\ Recall from the list of finite dimensional symmetric spaces (see Table V of \cite[\S X.6.2]{zbMATH03705455}) that in types $E_{n}$ (for $6\leq n\leq 8$), except for cases (E III) and (E VIII),  $K(E_{n})$ is semisimple.\ Therefore, the semisimple assumption is not very restrictive.

Recall from \cite[Section 2.1]{Pope005} that $\widehat{LK^{\circ}_{\mathbb{R}}}\cong\mathcal{S}^{1}\ltimes\widetilde{LK^{\circ}_{\mathbb{R}}}$  where $\widetilde{LK^{\circ}_{\mathbb{R}}}$ is the universal central extension of $LK^{\circ}_{\mathbb{R}}$.

\begin{prop}\label{proprepofK}
	Let $K(E_{9})$ be the fixed point subgroup of a Cartan-Chevalley involution compatible with the Cartan-Chevalley involution corresponding to $K^{\circ}_{\mathbb{R}}$ for the loop representation	of $G(E_{9})$.\ Let $K^{\circ}_{\mathds{R}}$ be semisimple. Then  $K:=\widehat{\mathds{C}^{*}K^{\circ}_{\mathds{R}}}$ has a finite dimensional
	(unfaithful) $\text{SO}(32)$-representation.\ This representation splits into an $\text{SO}^{+}(16)\times\text{SO}^{-}(16)$-representation.
\end{prop}

\begin{proof}
	First we show that the $\text{SO}(32)$ representation of $K(E_{9})$ extends continuously to $\widehat{LK^{\circ}_{\mathbb{R}}}$:
	
	$K(E_{9})$ (embedded in $K(E_{10})$) admits a finite dimensional (unfaithful) representation on $\text{SO}(32)$.\ This representation, by the main result of \cite{KNP07},
	splits into $\text{SO}^{+}(16)\times\text{SO}^{-}(16)$. Namely
	\begin{equation}\label{eq:lemrepofK00}
	\rho:K(E_{9})\to \text{SO}^{+}(16)\times\text{SO}^{-}(16)\hookrightarrow\text{SO}(32).
	\end{equation}
	The $\pm$ signs here are superficial.\ Note that the loop representation of $K(E_{9})$ is of the form $\widehat{L_{pol }K^{\circ}_{\mathbb{R}}}\cong\mathcal{S}^{1}\ltimes\widetilde{LK^{\circ}_{\mathbb{R}}}$ (see e.g., \cite[Chap.'s 7 \& 8]{zbMATH00194085}) where $\widetilde{L_{pol }K^{\circ}_{\mathbb{R}}}$ is the universal central extension of $L_{pol }K^{\circ}_{\mathbb{R}}$ (see \cite[Section 13.2]{zbMATH01821146} or \cite{Pope005}).\
	Since $K^{\circ}_{\mathds{R}}$ is semisimple, $L_{pol}K^{\circ}_{\mathbb{R}}$ is dense in $LK^{\circ}_{\mathbb{R}}$ by \cite[Proposition 3.5.3]{zbMATH04002404}.\
	Hence $K(E_{9})$ is dense in $\widehat{LK^{\circ}_{\mathbb{R}}}\cong\mathcal{S}^{1}\ltimes\widetilde{LK^{\circ}_{\mathbb{R}}}$.\ Therefore,
	the representation $\rho$ continuously extends to an (unfaithful) representation (denoted again by $\rho$) of $\widehat{LK^{\circ}_{\mathbb{R}}}$ (which preserves the $\text{SO}^{+}(16)\times\text{SO}^{-}(16)$-split since $\text{SO}^{\pm}(16)$ are closed subgroups), namely,
	\begin{equation}\label{eq:lemrepofK01}
	\rho:\widehat{LK^{\circ}_{\mathbb{R}}}\to\text{SO}^{+}(16)\times\text{SO}^{-}(16)\hookrightarrow \text{SO}(32).
	\end{equation}
	
	Next we show that $K$ can be continuously embedded in $\widehat{LK^{\circ}_{\mathbb{R}}}$: 
	
	Recall from Lemma~\ref{lemcontembpre} that there exists the following continuous injective morphism of topological groups:  
	\[
	H:\mathds{C}^{*}K^{\circ}_{\mathds{R}} \hookrightarrow LK^{\circ}_{\mathds{R}}. 
	\]
	Moreover, $K$ can be considered as a torus bundle over $\mathds{C}^{*}K^{\circ}_{\mathds{R}}$ compatible with $\widehat{LK^{\circ}_{\mathbb{R}}}$ when considered as a torus bundle over ${LK^{\circ}_{\mathbb{R}}}$ (see \cite[\S 4.3.2]{WFrei09}).\ Therefore, $H$ extends continuously and invectively to $K$ denoted again by $H$, namely,
	\begin{equation}\label{eqproprepofK}
	H:K \hookrightarrow \widehat{LK^{\circ}_{\mathbb{R}}}.
	\end{equation} 
	Hence, the proposition follows from the combination
	
	\begin{equation}\label{eqproprepofKK}
	\varrho:=H\circ\rho:K\to\text{SO}^{+}(16)\times\text{SO}^{-}(16)\hookrightarrow \text{SO}(32).
	\end{equation} 	
\end{proof} 

From now on, we always assume that $K^{\circ}_{\mathds{R}}$ is semisimple.\ The associated bundle to $\pi$ with respect to $\varrho$ is defined as follows
\begin{equation}\label{fframebundle}
\text{SO}(\pi):G\times_{\varrho}\text{SO}(32)\to M.
\end{equation}
$\text{SO}(\pi)$ is called the \textit{finite frame bundle} on $M$ with respect to $\varrho$.\ 
\begin{definition}\label{deffspinstr}
	We say $M$ has a \textit{finite $\text{spin}$ ($\text{spin}^{c}$) structure} if the structure group of $\text{SO}(\pi)$ has a lift to a  $\text{Spin}(32)$ ($\text{Spin}^{c}(32)$).\ When $M$ has a finite spin structure, define
	\[
	\text{Spin}(\pi):=G\times_{\tilde{\varrho}}\text{Spin}(32)\to M.
	\]
	We call the following associated bundle, the \textit{finite spinor bundle} over $M$:
	\begin{equation}\label{eqdeffspinstr}
	\mathcal{S}(M):=G\times_{\tilde{\varrho}}\mathbb{S}^{32}\to M,
	\end{equation}
	where $\text{Clif}(\mathbb{R}^{32})\otimes\mathbb{C}\cong\text{End}(\mathbb{S}^{32})$.\ Similarly,
	when $M$ admits a finite $\text{Spin}^{c}$ structure, define
	\[
	\text{Spin}^{c}(\pi):=G\times_{\tilde{\varrho}}\text{Spin}^{c}(32)\to M.
	\]
	We call the associated bundle, the \textit{finite spinor-c bundle} over $M$ denoted by $\mathcal{S}^{c}(M)$.
	\end{definition}
\begin{remark}\label{remexspincstr}\leavevmode
	\begin{itemize}
		\item[(I)] In the context of {\v C}ech cohomology theory, since $\text{Ad}:\text{Spin}(32)\to\text{SO}(32)$ is a universal covering map, having a finite spin structure is equivalent to vanishing the induced following morphism when $K$ is connected
		\[
		\varrho_{\ast}:\pi_{1}(K)\to\pi_{1}(SO(32))\cong\mathbb{Z}_{2},
		\]
		(vanishing $\rho_{\ast}$ implies vanishing the second Stiefel-Whitney class of $SO(\pi)$).\ Note that by the definition of $\varrho$ in Proposition~\ref{proprepofK}, to have a $\text{spin}$ structure it is enough to verify the above equivalent	condition for $\widehat{LK^{\circ}_{\mathbb{R}}}$.
		\item[(II)] It is also important to find a condition under which
		the structure group $K$ of $\pi:G\to M$ reduces to its one-component $K^{\circ}$.\ Note that since $K$ is a topological
		group, we have the following exact sequence
		\begin{equation}\label{eq:remlifttoonecom00}
		K^{\circ}\to K\to\pi_{0}(K).
		\end{equation}
		Accordingly, the following {\v C}ech cohomological mapping exits:
		\begin{equation}\label{eq:remlifttoonecom01}
		\nu_{1}:\text{H}^{1}(M,K)\to\text{H}^{1}(M,\pi_{0}(K)).
		\end{equation}  
		Hence, the structure group $K$ of $\pi:G\to M$ reduces to $K^{\circ}$
		if and only if $\nu_{1}(\pi)=0$.
	\end{itemize}
	
\end{remark}

\begin{theorem}\label{thmspincadm}
	If $K^{\circ}_{\mathbb{R}}$ simply connected then $M$ admits a finite $\text{spin}^{c}$ structure.
\end{theorem}
\begin{proof}
	Recall from (\ref{eqproprepofK}) in Proposition~\ref{proprepofK} that there exists the following continuous injective morphism of topological group: 
	\[
	H:K \hookrightarrow LK^{\circ}_{\mathds{R}}. 
	\]
	Let
	\begin{equation}\label{eq:remlifttoonecom05}
	\pi^{H}:G\times_{H}\widehat{LK^{\circ}_{\mathbb{R}}}\to M,
	\end{equation}
	be the associated (topological) bundle to $\pi$ with respect to $H$.\ Now, with the same argument as in Remark~\ref{remexspincstr}, having a $\text{spin}^{c}$ structure is equivalent to the condition that $\pi_{1}(K)$ does not contain 2-torsion elements.\ Therefore, in view of (\ref{eq:remlifttoonecom05}) and the definition of $\varrho$ in Proposition~\ref{proprepofK}, to have a $\text{spin}^{c}$ structure it suffices to check the above equivalent torsion condition for $\widehat{LK^{\circ}_{\mathbb{R}}}$.
	
	Now note that since $\widehat{LK^{\circ}_{\mathbb{R}}}\cong\mathcal{S}^{1}\ltimes\widetilde{LK^{\circ}_{\mathbb{R}}}$, we have  
	\begin{equation}\label{eq:lemhomto01}
	\pi_{0}(\widehat{LK^{\circ}_{\mathbb{R}}})=\pi_{0}(\widetilde{LK^{\circ}_{\mathbb{R}}}),
	\end{equation}
	and 
	\begin{equation}\label{eq:lemhomto011}
	\pi_{1}(\widehat{LK^{\circ}_{\mathbb{R}}})=\mathbb{Z}\oplus\pi_{1}(\widetilde{LK^{\circ}_{\mathbb{R}}}).
	\end{equation}
	But by \cite[Proposition 4.6.9]{zbMATH04002404} we have 
	\begin{equation}\label{eq:lemhomto02}
	\pi_{0}(\widetilde{LK^{\circ}_{\mathbb{R}}})\cong\pi_{0}(\widetilde{LK^{\circ}_{\mathbb{R}}})\cong\pi_{1}(K^{\circ}_{\mathbb{R}}).
	\end{equation}
	Therefore, by the simply connected assumption, all $\widehat{LK^{\circ}_{\mathbb{R}}}, \widetilde{LK^{\circ}_{\mathbb{R}}} $ and $LK^{\circ}_{\mathbb{R}}$ are connected.\

	Now we show that $\pi_{1}(\widehat{LK^{\circ}_{\mathbb{R}}})$ is torsion free.\ Considering the $\mathcal{S}^{1}$-fibration 
	\begin{equation}\label{eq:protorsion00}
	\mathcal{S}^{1}\to\widetilde{LK^{\circ}_{\mathbb{R}}}\to LK^{\circ}_{\mathbb{R}},
	\end{equation}
	we obtain the following exact sequence of homotopy groups
	\begin{equation}\label{eq:protorsion01}
	0\to\pi_{2}(\widetilde{LK^{\circ}_{\mathbb{R}}})\to \pi_{2}(LK^{\circ}_{\mathbb{R}})\stackrel{\alpha}{\to}\pi_{1}(\mathcal{S}^{1})\to\pi_{1}(\widetilde{LK^{\circ}_{\mathbb{R}}})\to\pi_{1}(LK^{\circ}_{\mathbb{R}})\to 0.
	\end{equation}
	Recall that $\widetilde{LK^{\circ}_{\mathbb{R}}}$ is the universal central extension
	of $LK^{\circ}_{\mathbb{R}}$ (see \cite[Section 4.5.1]{WFreyn11} and \cite[Section 4.4]{zbMATH04002404}), therefore, its associated Chern class corresponds to a generator and hence the ``transgression'' map $\alpha$
	is surjective.\ Moreover, since $K^{\circ}_{\mathbb{R}}$ is simply connected, the ``transgression'' map $\alpha$
	is actually and isomorphism (see e.g., \cite{zbMATH04002404} and the proof of \cite[Proposition 2.1]{zbMATH04216267}).\ Hence form (\ref{eq:protorsion01})  we deduce the following
	exact sequence:
	\begin{equation}\label{eq:protorsion02}
	0\to\pi_{1}(\widetilde{LK^{\circ}_{\mathbb{R}}})\to\pi_{1}(LK^{\circ}_{\mathbb{R}})\to 0.
	\end{equation}
	
	This, in view of (\ref{eq:lemhomto011}), implies that $\pi_{1}(\widehat{LK^{\circ}_{\mathbb{R}}})$ is torsion free
	if and only if $\pi_{1}(\widetilde{LK^{\circ}_{\mathbb{R}}})$ is torsion free
	if and only if $\pi_{1}(K^{\circ}_{\mathbb{R}})$ is torsion free.\ The theorem follows.
\end{proof}

By relaxing the simply connected condition on $K^{\circ}_{\mathbb{R}}$ one can obtain certain conditions under which $M$ admits a finite $\text{spin}^{c}$ structure.\ These condition are more technical and more difficult to be verified.\ First we need to be able to reduce the structure group ${\widehat{LK^{\circ}_{\mathbb{R}}}}^{\circ}$ to its identity component.\ For this, similar to Remark~\ref{remexspincstr}(II), we
consider the following exact sequence 
\begin{equation}\label{eq:remlifttoonecom022}
{\widehat{LK^{\circ}_{\mathbb{R}}}}^{\circ}\to \widehat{LK^{\circ}_{\mathbb{R}}}\to\pi_{0}(\widehat{LK^{\circ}_{\mathbb{R}}}).
\end{equation}
This, in view of (\ref{eq:lemhomto02}), yields the following exact sequence 
\begin{equation}\label{eq:remlifttoonecom03}
{\widehat{LK^{\circ}_{\mathbb{R}}}}^{\circ}\to \widehat{LK^{\circ}_{\mathbb{R}}}\to\pi_{1}(K^{\circ}_{\mathbb{R}}),
\end{equation}
which induces the following cohomological mapping
\begin{equation}\label{eq:remlifttoonecom04}
\tilde{\nu}_{1}:\text{H}^{1}(M,\widehat{LK^{\circ}_{\mathbb{R}}})\to\text{H}^{1}(M,\pi_{0}(\widehat{LK^{\circ}_{\mathbb{R}}})).
\end{equation} 
Hence the structure group $\widehat{LK^{\circ}_{\mathbb{R}}}$ reduces to ${\widehat{LK^{\circ}_{\mathbb{R}}}}^{\circ}$ if and only if $\tilde{\nu}_{1}(\pi^{H})=0$.\ Now by similar argument as in the proof of the above theorem one can deduce the following:
\begin{cor}\label{cortorsion}
	In  the above setting, if 
	\begin{itemize}
		\item $\tilde{\nu}_{1}(\pi)=0$; and,
		\item the first Chern class of $\widetilde{LK^{\circ}_{\mathbb{R}}}$ corresponds to
		a generator; and,
		\item $\pi_{1}(K^{\circ}_{\mathbb{R}})$ is torsion free,
	\end{itemize}
	then $M$ admits a finite $\text{Spin}^{c}$ structure. 
\end{cor}

\begin{example}\label{exaCartanEIV}
	Let $M:=\widehat{\mathds{C}^{*}(E_{6})_{\mathds{R}}}/\widehat{\mathds{C}^{*}(F_{4})_{\mathds{R}}}$ be the affine symmetric space obtained from the symmetric space of type E IV on the Cartan's list of symmetric spaces.\ Since $(F_{4})_{\mathds{R}}$ is simply connected, $M$ admits a finite $\text{spin}^{c}$ structure.\ Another interesting example is type (E IX) where $M:=\widehat{\mathds{C}^{*}(E_{8})_{\mathds{R}}}/\widehat{\mathds{C}^{*}(E_{7}\times\text{SU}(2))_{\mathds{R}}}$.
\end{example}

\begin{cor}
	Let $M$ be an affine Kac-Moody symmetric space with a finite $\text{Spin}^{c}$ structure.\ Then the $\text{SO}(32)$ representation $\rho$ of the corresponding algebraic Kac-Moody ``compact'' forms can be lifted to $\text{Spin}^{c}(32)$.\ Moreover, such affine Kac-Moody symmetric spaces exist.
\end{cor}

\begin{proof}
	First note that $\tilde{\varrho}$ is a lift of the $\text{SO}(32)$ representation $\varrho$ of $K$ (see (\ref{eqproprepofKK})) to $\text{Spin}^{c}(32)$.\ Now since $K=\widehat{\mathds{C}^{*}K^{\circ}_{\mathds{R}}}$ contains the polynomial loop group $\widehat{L_{pol }K^{\circ}_{\mathbb{R}}}$, having a finite $\text{Spin}^{c}$ structure means obtaining a finite spin representation for certain algebraic Kac-Moody "compact" forms.\ The corollary follows from Theorem~\ref{thmspincadm}.  
\end{proof}

\bibliographystyle{plain}     
\bibliography{document}   

\end{document}